\documentclass[final,3p,times]{elsarticle}

\usepackage{amsmath,amssymb,amsfonts,amsthm}
\usepackage{color}
\usepackage{enumerate}
\usepackage{algorithmic,algorithm}
\usepackage{graphics, graphicx}
\usepackage{float}
\usepackage{relsize}
\usepackage{caption}
\usepackage{subfigure}


%

\newtheorem{thm}{Theorem}[section]
\newtheorem{lemma}[thm]{Lemma}
\newtheorem{defi}[thm]{Definition}
\newtheorem{rem}{Remark}

\newcommand*{\bigtimes}{\mathop{\hbox{\Large{$\times$}}}}
\newcommand{\coord}{\Upsilon}
\newcommand{\inc}{\textrm{inc}}
\newcommand{\lex}{\textrm{lex}}
\newcommand{\aux}{\textrm{aux}}
\newcommand{\eqcl}{\mathrel{\mathsmaller{\mathsmaller{^{\boldsymbol{\sqsubseteq}}}}}}
\newcommand{\PFDalg}{\texttt{PFD\_of\_Di-Hypergraphs}}


\begin{document}

\sloppy

\begin{frontmatter}


\title{Fast Factorization of {C}artesian products of Hypergraphs}	

\author[GW,SB]{Marc Hellmuth\corref{cor1}}
				\ead{mhellmuth@mailbox.org}
\author[HH]{Florian Lehner}
				\ead{mail@florian-lehner.net}

\address[GW]{Department of Mathematics and Computer Science, University of Greifswald, 
				 Walther-Rathenau-Stra{\ss}e 47, 17487 Greifswald, Germany }
\address[SB]{Center for Bioinformatics, Saarland University, Building E
             2.1, Room 413, P.O. Box 15 11 50, D-66041 Saarbr\"{u}cken,
             Germany   					}
\address[HH]{Department of Mathematics, University of Hamburg, Bundesstra{\ss}e 55, 20146 Hamburg. Germany}

\begin{abstract}
Cartesian products of graphs and hypergraphs  have been studied since the 1960s.
For (un)directed hypergraphs, unique \emph{prime factor decomposition (PFD)} results with
respect to the Cartesian product are known. However, there is still a lack of
algorithms, that compute the PFD of directed hypergraphs with respect to the
Cartesian product. 

In this contribution, we focus on the algorithmic aspects for determining the 
Cartesian prime factors of a finite, connected, directed hypergraph and
present a first polynomial time algorithm to compute its PFD. 
In particular, the algorithm has time complexity $O(|E||V|r^2)$ 
for hypergraphs $H=(V,E)$, where the rank $r$ is the maximum number of vertices 
contained in an hyperedge of $H$. If $r$ is bounded, then this algorithm performs even
in $O(|E|\log^2(|V|))$ time. Thus, our method additionally improves also the 
time complexity of PFD-algorithms designed for undirected hypergraphs
that have time complexity $O(|E||V|r^6\Delta^6)$, where $\Delta$ is 
the maximum number of hyperedges a vertex is contained in.
\end{abstract}

\begin{keyword}
Directed Hypergraph \sep Cartesian Product \sep 
Prime Factor Decomposition \sep Factorization Algorithm \sep 2-Section
\end{keyword}
\end{frontmatter}

\sloppy

\section{Introduction}

Products are a common way in mathematics of constructing larger objects from
smaller building blocks. For graphs, hypergraphs, and related set systems
several types of products have been investigated, see
\cite{HOS-12,Hammack:2011a} for recent overviews. 

In this contribution we will focus on the \emph{Cartesian product} of
\emph{directed hypergraphs} that are the common generalization of both directed
graphs and (undirected) hypergraphs. In particular, we present a fast and
conceptually very simple algorithm to find the decomposition of directed
hypergraphs into \emph{prime} hypergraphs (its so-called prime factors), where a
(hyper)graph is called prime if it cannot be presented as the product of two
nontrivial (hyper)graphs, that is, as the product of two (hyper)graphs with at
least two vertices. 

\paragraph{Graphs and the Cartesian Product}
A graph is a tuple $G=(V,E)$ with non-empty set of vertices $V$ and a set of
edges $E$ containing two-element subsets of $V$. If the edges are ordered pairs,
then $G$ is called directed and undirected, otherwise. The Cartesian
\emph{graph} product was introduced by Gert Sabidussi \cite{Sabidussi:60}. As
noted by Szamko{\l}owicz \cite{Szamko:62} also Shapiro introduced a notion of
Cartesian products of graphs in \cite{Shapiro:53}. Sabidussi and independently
V.G. Vizing \cite{Vizing63:CartProd} showed that connected \emph{undirected}
graphs have a representation as the Cartesian product of prime graphs
that is unique up to the order and isomorphisms of the factors. The question
whether one can find the prime factorization of connected undirected graphs in
polynomial time was answered about two decades later by Feigenbaum et al.\
\cite{FHS:85} who presented an $O(|V|^{4.5})$ time algorithm. From then on, a
couple of factorization algorithms for undirected graphs have been developed
\cite{AHI:92,Feder:92,FHS:85,Imrich07:linear,Winkler:87}. The fastest one is due
to Imrich and Peterin \cite{Imrich07:linear} and runs in linear-time
$O(|V|+|E|)$.

For connected \emph{directed} graphs, Feigenbaum showed that the Cartesian
product satisfies the unique prime factorization property \cite{Fei86}.
Additionally, she provided a polynomial-time algorithm to determine the prime
factors which was improved to a linear time approach by Crespelle et al.\
\cite{CTL:13}. 

\paragraph{Hypergraphs and the Cartesian Product}
Hypergraphs are a natural generalization of graphs in which ``edges'' may
consist of more than two vertices. More precisely, a hypergraph is a tuple
$H=(V,E)$ with non-empty set of vertices $V$ and a set of hyperedges $E$, where
each $e\in E$ is an ordered pair of non-empty sets of vertices $e=(t(e),h(e))$.
If $t(e)=h(e)$ for all $e\in E$ the hypergraph is called undirected and
directed, otherwise. Products of hypergraphs have been investigated by several
authors since the 1960s
\cite{Berge:Hypergraphs,Black15,Bretto06:Helly,Bretto:13b,Bretto09:HyperCartProd,Bretto:13,Doerfler79:CoversDirectProd,HNO:14,
Imrich67:Mengensysteme,Imrich70:SchwachKartProd,KA:12,KA:15,OstHellmStad11:CartProd,
Sonntag89:HamCart,Zhu:92}. It was shown by Imrich \cite{Imrich67:Mengensysteme}
that connected \emph{undirected} hypergraphs have a unique prime factor
decomposition (PFD) w.r.t.\ to the Cartesian product, up to isomorphism and the
order of the factors. A first polynomial-time factorization algorithm 
for undirected hypergraphs was proposed by Bretto et al.\ \cite{Bretto:13}.

Unique prime factorization properties for \emph{directed} hypergraphs
were derived by Ostermeier et al.\ \cite{OstHellmStad11:CartProd}.
However, up to our knowledge, no
algorithm to determine the Cartesian prime factors of a connected directed
hypergraph is established, so-far.

\paragraph{Summary of the Results}
In this contribution, we present an algorithm to compute the PFD of connected
directed hypergraphs in $O(|V||E|r^2)$ time, where the rank $r$ denotes the maximum number of vertices
contained in the hyperedges. In addition, if we assume to have
hypergraphs with bounded rank the algorithm runs in $O(|E|\log^2(|V|))$ time. 
Note, as directed hypergraphs are a natural
generalization of undirected hypergraphs, our method
generalizes and significantly improves the time-complexity of the method by Bretto et al.\ \cite{Bretto:13}.
In fact, the algorithm of Bretto et al.\
has time complexity $O(|V||E|\Delta^6r^6)$, where $\Delta$ is the maximum number of hyperedges a vertex is
contained in. Assuming that given hypergraphs have bounded rank $r$ and bounded maximum degree $\Delta$
this algorithm runs therefore in $O(|V||E|)$ time.

We shortly outline our method. Given an arbitrary connected directed hypergraph
$H=(V,E)$ we first compute its so-called 2-section $[H]_2$, that is, roughly
spoken the underlying \emph{undirected} graph of $H$. This allows us to use the
algorithm of Imrich and Peterin \cite{Imrich07:linear} in order to compute the
PFD of $[H]_2$ w.r.t.\ the Cartesian \emph{graph} product. As we will show, this
provides enough information to compute the Cartesian prime factors of the
directed hypergraph $H$. In distinction from the method of Bretto et
al.\ our algorithm is in a sense conceptually simpler, as \emph{(1)} we do not
need the transformation of the hypergraph $H$ into its so-called L2-section and
back, where the L2-section is is an edge-labeled version of the 2-section
$[H]_2$, and \emph{(2)} the test which (collections) of the factors of the
2-section are prime factors of $H$ follows a complete new idea based on
increments of fixed vertex-coordinate positions,
that allows an easy and efficient check to determine the PFD of $H$.

\section{Preliminaries}

\subsection{Basic Definitions}

A \emph{directed hypergraph} $H=(V,E)$ consists of a finite \emph{vertex} set
$V(H):=V$ and a set of \emph{directed hyperedges} or \emph{(hyper)arcs}
$E(H):=E$. Each arc $e\in E$ is an ordered pair of non-empty sets of vertices
$e=(t(e),h(e))$. The sets $t(e)\subseteq V$ and $h(e)\subseteq V$ are called the
\emph{tail} and \emph{head} of $e$, respectively. The set of vertices, that are
contained in an arc will be denoted by $V(e):=t(e)\cup h(e)$. If $t(e)=h(e)$
holds for all $e\in E$, we identify $e$ with $V(e)$, and we call $H=(V, E)$ an
\emph{undirected hypergraph}. 
An undirected hypergraph is an
\emph{undirected graph} if $|V(e))|=2$ for all $e\in E$.
The elements of $E$ are called simply edges, if we consider an undirected graph. 
The hypergraph with $|V|=1$ and $E=\emptyset$ is denoted by $K_1$ and is 
called \emph{trivial}. 

Throughout this contribution, we only consider hypergraphs without multiple
hyperedges and thus, being $E$ a usual set, and without loops, that is, $|V(e)|>1$
holds for all $e\in E$. However, we allow to have hyperedges being properly
contained in other ones, i.e., we might have arcs $e,f\in E$ with
$t(e)\subseteq t(f)$ and $h(e)\subseteq h(f)$.

A \emph{partial hypergraph} or \emph{sub-hypergraph} $H'=(V',E')$ of a
hypergraph $H=(V,E)$, denoted by $H'\subseteq H$, is a hypergraph such that
$V'\subseteq V$ and $E'\subseteq E$. The partial hypergraph $H'=(V',E')$ is
\emph{induced (by $V'$)} if $E' = \{e\in E\mid V(e)\subseteq V'\}$. Induced
hypergraphs will be denoted by $\left\langle V'\right\rangle$.

A \emph{weak path $P$ (joining the vertices $v_0,v_k\in V$)} in a hypergraph
$H=(V,E)$ is a sequence $P=(v_0,e_1,v_1,e_2,\ldots,e_k,v_k)$ of distinct
vertices and arcs of $H$, such that $v_0\in V(e_1)$, $v_k\in V(e_k)$ and $v_j\in
V(e_j)\cap V(e_{j+1})$. A hypergraph $H$ is said to be \emph{weakly connected}
or simply \emph{connected} for short, if any two vertices of $H$ can be joined
by a weak path. A \emph{connected component} of a hypergraph $H$ is a connected
sub-hypergraph $H'\subseteq H$ that is maximal w.r.t.\ inclusion, i.e., there is
no other connected sub-hypergraph $H''\subseteq H$ with $H'\subsetneq H''$.
Usually, we identify connected components $H'=(V',E')$ of $H$ simply by their
vertex set $V'$, since $\left\langle V'\right\rangle = H'$.

A \emph{homomorphism} from $H_1=(V_1, E_1)$ into $H_2=(V_2, E_2)$ is a mapping
$\phi: V_1\rightarrow V_2$ such that $\phi(e)$ is an arc in $H_2$ whenever $e$
is an arc in $H_1$ with the property that $\phi(t(e))=t(\phi(e))$ and
$\phi(h(e))=h(\phi(e))$. A bijective homomorphism $\phi$ whose inverse function
is also a homomorphism is called an \emph{isomorphism}.

The \emph{rank} of a hypergraph $H=(V,E)$ is $r(H)=\max_{e\in E}|V(e)|$.

The \emph{$2$-section $[H]_2$} of a (directed) hypergraph $H=(V,E)$ is the undirected
graph $(V,E')$ with $E'=\left\{xy:=\{x,y\}\subseteq V\mid x\neq y,\,\exists\;
e\in E: x,y\subseteq V(e)\right\}$. In other words, two vertices are linked by an edge
in $[H]_2$ if they belong to the same hyperarc in $H$. Thus, every arc $e\in E$
of $H$ is a \emph{complete graph} in $[H]_2$, i.e., all pairwise different
vertices in $V(e)$ are linked by an edge in $[H]_2$. Complete graphs defined on a vertex
set $V$ will be denoted by $K_{|V|}$. 

We will also deal with equivalence relations, for which the following notations
are needed. For an equivalence relations $R$ we write $\varrho \eqcl R$ to
indicate that $\varrho$ is an equivalence class of $R$. A relation $Q$ is finer
than a relation $R$ while the relation $R$ is coarser than $Q$ if $(e,f)\in Q$
implies $(e,f)\in R$, i.e, $Q\subseteq R$. In other words, for each class
$\varrho$ of $R$ there is a collection $\{ \chi | \chi\subseteq \varrho\}$ of
$Q$-classes, whose union equals $\varrho$. Equivalently, for all $\varphi\eqcl
Q$ and $\psi\eqcl R$ we have either $\varphi\subseteq \psi$ or
$\varphi\cap\psi=\emptyset$.

\begin{rem}
If not stated differently, we assume that the hypergraphs considered in this
contribution are connected. 
\end{rem}

\subsection{The Cartesian Product, (Pre-)Coordinates and (Pre-)Layers}

Let $H_1$ and $H_2$ be two hypergraphs. The \emph{Cartesian product} $H=H_1\Box
H_2$ has vertex set $V(H)=V(H_1)\times V(H_2)$, that is the Cartesian set
product of the vertex sets of the factors and the arc set
			\begin{align*} 
				E(H)=&\big\{ ( \{x\}\times t(f),\{x\}\times h(f)) \mid x\in V(H_1), f\in  E (H_2)\big\} \cup\\
					  & \big\{ (t(e)\times \{y\}\ ,\ h(e)\times \{y\}) \mid e\in  E (H_1), y\in V(H_2)\big\}.
			\end{align*}

Thus, the tuple 
$( \{(x_i,y_i), i\in I\} , \{(x'_j,y'_j), j\in J\}) \text{\ with\ }  x_i,x'_j\in V(H_1),\  y_i,y'_j\in V(H_2),\ i\in I, j\in J$ 
is an arc in $ E (H_1\Box H_2)$ if and only if either
			\begin{align*} 
				\text{\ (i)\ }&   ( \{x_i, i\in I\} , \{x'_j, j\in J\}) \text{\ is an arc in\ }  E(H_1) \text{\ and\ } y_i=y'_j \text{\ for all\ } i\in I, j\in J \text{\ or\ }\\
				\text{\ (ii)\ }&  ( \{y_i, i\in I\} , \{y'_j, j\in J\}) \text{\ is an arc in\ }  E(H_2) \text{\ and\ } x_i=x'_j \text{\ for all\ } i\in I, j\in J.
			\end{align*}

The Cartesian product is associative, commutative, and the trivial one-vertex
hypergraph $K_1$ without arcs serves as unit \cite{HOS-12, OstHellmStad11:CartProd}.
Thus, for arbitrary finitely many factors $\{H_i, i\in I\}$ the product $\Box_{i
\in I} H_i$ is well-defined, and each vertex $x\in V(\Box_{i \in I} H_i)$ is
properly ``coordinatized'' by the vector $(x_i)_{i\in I}$ whose entries are the
vertices $x_i$ of the factors $H_i$.

A nontrivial hypergraph $H$ is \emph{prime} with respect to the Cartesian
product if it cannot be represented as the Cartesian product of two nontrivial
hypergraphs. A \emph{prime factor decomposition (PFD)} of $H$ is a
representation as a Cartesian product $H=\Box_{i\in I} H_i$ such that all factors $H_i$,
$i\in I$, are prime and $H_i\not\simeq K_1$. Note, the number $k$ of  prime
factors of $H = ( V , E )$ is bounded by $log(| V |)$, since every Cartesian
product of $k$ non-trivial hypergraphs has at least $2^k$ vertices.

Two important results concerning the Cartesian products of hypergraphs are given now. 

\begin{lemma}[\cite{OstHellmStad11:CartProd}]
 The Cartesian product $H=\Box_{i=1}^n H_i$  of directed hypergraphs
	is connected if and only if all of its factors $H_i$ are connected.
\label{lem:Cart_connected}
\end{lemma}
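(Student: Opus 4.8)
The plan is to reduce to the case of two factors and then handle both directions via projections and liftings of weak paths. By associativity and commutativity of $\Box$ it suffices to prove that $H_1\Box H_2$ is connected if and only if both $H_1$ and $H_2$ are connected; the statement for $n$ factors then follows by induction, writing $\Box_{i=1}^n H_i = H_1 \Box \left(\Box_{i=2}^n H_i\right)$.

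For the necessity direction I would argue by contraposition: assuming, say, that $H_1$ is disconnected, I exhibit two vertices of the product that cannot be joined by a weak path. The key observation is that every arc $e$ of $H_1\Box H_2$ is of exactly one of the two types described by (i) and (ii): either $V(e)=V(f)\times\{c\}$ for some arc $f\in E(H_1)$ and fixed $c\in V(H_2)$, or $V(e)=\{d\}\times V(g)$ for some arc $g\in E(H_2)$ and fixed $d\in V(H_1)$. Consequently, applying the projection $p_1\colon(a,b)\mapsto a$ to the vertices of a weak path in $H_1\Box H_2$ produces a sequence in $V(H_1)$ in which consecutive entries are either equal (when the corresponding arc is of type (ii)) or both lie in a common arc of $H_1$ (type (i)). Deleting repetitions yields a weak walk, and hence a weak path, in $H_1$ joining the two projected endpoints. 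Thus if $x,x'$ lie in different connected components of $H_1$, no weak path can join $(x,y)$ and $(x',y)$ in $H_1\Box H_2$, so the product is disconnected; the symmetric argument via $p_2$ handles a disconnected $H_2$.

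For sufficiency, assume both factors are connected and pick arbitrary $(x,y),(x',y')\in V(H_1\Box H_2)$. I would first lift a weak path $P=(x_0,f_1,\ldots,f_k,x_k)$ in $H_1$ joining $x$ to $x'$ into the product by fixing the second coordinate $y$: replacing each $f_i$ by the type-(i) arc with vertex set $V(f_i)\times\{y\}$ and each $x_j$ by $(x_j,y)$ gives a weak path from $(x,y)$ to $(x',y)$, since distinctness of the $x_j$ and of the $f_i$ is preserved and the incidence conditions carry over verbatim. Analogously I lift a weak path in $H_2$ joining $y$ to $y'$ while fixing the first coordinate $x'$, obtaining a weak path from $(x',y)$ to $(x',y')$. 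Concatenating the two lifted paths joins $(x,y)$ to $(x',y')$, which establishes connectedness of the product.

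The only genuinely delicate point is verifying that the projected and lifted sequences meet the formal requirements of a weak path, namely that the vertices and arcs remain distinct and the incidence conditions hold; this is where the hypergraph setting, in which arcs carry several vertices, differs from the graph case. In the projection step it is resolved by passing from a walk to a path after collapsing the constant stretches produced by type-(ii) arcs, and in the lifting step the fixed coordinate guarantees that distinctness and incidence are inherited directly from the factor path. Everything else is routine bookkeeping.
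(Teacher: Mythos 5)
Your proof is correct, but note that there is nothing in this paper to compare it against: Lemma~\ref{lem:Cart_connected} is imported verbatim from Ostermeier et al.\ \cite{OstHellmStad11:CartProd} and the paper gives no proof of it. Your argument is the standard one for connectedness of Cartesian products (projection of weak paths for necessity, lifting of weak paths with one coordinate frozen for sufficiency), and it is sound: the reduction to two factors via associativity and induction is legitimate, the case analysis of arcs into types (i) and (ii) is exactly the definition of the product's arc set, and you correctly flag the two genuinely delicate points (walk-to-path extraction after projection, inheritance of distinctness under lifting). Two small details you could make explicit to close the argument completely: the two lifted paths meet \emph{only} in $(x',y)$, because every vertex of the first lift has second coordinate $y$ while every vertex of the second has first coordinate $x'$; and the arcs occurring in the two lifts are pairwise distinct, since the no-loop convention $|V(e)|>1$ forces a type-(i) arc $\bigl(t(f)\times\{y\},\,h(f)\times\{y\}\bigr)$ to have at least two distinct first coordinates, whereas a type-(ii) arc has exactly one. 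With these observations the concatenation is literally a weak path, and no further repair is needed.
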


\begin{thm}[\cite{OstHellmStad11:CartProd}]
	Connected (directed) hypergraphs have a unique prime factor decomposition
	with respect to the Cartesian product. 
	\label{thm:upfd}
\end{thm}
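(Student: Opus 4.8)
The plan is to prove \emph{existence} and \emph{uniqueness} of the PFD separately, the latter being the real work.

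\emph{Existence.} I would induct on $|V(H)|$. If $H$ is prime there is nothing to do. Otherwise $H=H_1\Box H_2$ with both factors nontrivial; by Lemma~\ref{lem:Cart_connected} both $H_1$ and $H_2$ are connected, and since each has at least two vertices and $|V(H)|=|V(H_1)|\cdot|V(H_2)|$, each factor has strictly fewer vertices than $H$. The induction hypothesis gives PFDs of $H_1$ and $H_2$, and associativity of $\Box$ glues them into a PFD of $H$; the base case is $K_1$.

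\emph{Uniqueness.} Because the factorization behaves well under the $2$-section, I would reduce it to the known unique factorization of connected undirected graphs. The first step is to verify the product-compatibility identity $[\Box_i H_i]_2=\Box_i[H_i]_2$: two vertices of the product are adjacent in the $2$-section exactly when they agree in all but one coordinate and are joined there through a common hyperarc of the corresponding factor, which is precisely adjacency in the Cartesian graph product of the $2$-sections. Thus any PFD of $H$ induces a (generally coarser) Cartesian factorization of the \emph{graph} $[H]_2$, whose own prime factorization is unique. The second step identifies how the hypergraph factors sit inside this graph factorization: every hyperarc $e$ of $H$ lies in a single direction of any PFD, so its clique in $[H]_2$ consists of edges of one hypergraph-factor, possibly spread across several prime graph-factors. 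Declaring two prime factors of $[H]_2$ \emph{equivalent} whenever some hyperarc of $H$ contributes edges to both, and passing to the transitive closure, yields a partition of the graph-factors that is manifestly intrinsic to $H$. I would then show this partition is exactly the one induced by any PFD of $H$, i.e.\ its blocks are the prime hypergraph-factors. Since the partition is defined from $[H]_2$ and the hyperarcs of $H$ alone, independently of any chosen factorization, uniqueness of the hypergraph PFD follows.

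The hard part is the lift from the graph level back to the directed hypergraph: I expect the main obstacle to be verifying that each block of merged graph-coordinates actually carries a well-defined directed hyperarc structure, so that reassembling the blocks reconstructs a genuine Cartesian product of \emph{hypergraphs} and not merely of graphs. Concretely, one must check that fixing the coordinates outside a block yields isomorphic connected layers (using Lemma~\ref{lem:Cart_connected} again) and that the tail/head partition of each hyperarc is respected by this coordinatization, a point with no analogue in the undirected graph theory. A further delicacy, specific to the hypergraph setting, is that the paper allows hyperarcs properly contained in others and sharing vertex sets, so one must ensure that the cliques induced in $[H]_2$ merge precisely the graph-factors forced by genuine single hyperarcs and introduce no spurious equivalences; establishing that the resulting blocks are individually prime then completes the argument. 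If the $2$-section reduction proved too lossy at some step, the fallback would be the classical route: an intrinsic Djokovi\'{c}--Winkler-type relation on the hyperarcs whose transitive closure refines every factorization, with the PFD recovered as its finest product-consistent coarsening.
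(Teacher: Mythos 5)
First, note that the paper does not prove Theorem~\ref{thm:upfd} at all: it is imported from Ostermeier et al.\ \cite{OstHellmStad11:CartProd}, so there is no in-paper proof to match; your reduction-to-the-2-section strategy is, however, exactly the machinery the paper builds for its \emph{algorithm} (Lemma~\ref{lem:Cart_2section}, pre-coordinatizations, merging prime factors of $[H]_2$). Your existence argument by induction on $|V(H)|$ using Lemma~\ref{lem:Cart_connected} is fine. The problem is in the uniqueness half, and it is fatal as stated.

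Your merging criterion can never fire. By Lemma~\ref{lem:arc-in-prelayer} (complete graphs are S-prime), the clique that a hyperarc $e$ induces in $[H]_2$ lies entirely inside a single layer of \emph{any} Cartesian factorization of $[H]_2$; equivalently, all edges of that clique belong to one and the same prime-factor direction. Hence ``some hyperarc of $H$ contributes edges to both'' never holds for two distinct prime factors of $[H]_2$ --- indeed you yourself invoke the fact that every hyperarc lies in a single direction, which contradicts your phrase ``possibly spread across several prime graph-factors.'' Consequently your equivalence relation is the identity, its transitive closure is the discrete partition, and your argument would conclude that the PFD of $H$ always matches the PFD of $[H]_2$ factor-by-factor. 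That is false: Figure~\ref{fig:exmpl1} of the paper exhibits a \emph{prime} directed hypergraph $H$ with $[H]_2 \simeq K_4 \Box K_2$. The intrinsic reason graph factors must be merged is not incidence of a hyperarc with two directions but failure of translation-invariance of the arc set: factors $i$ and $j$ of $[H]_2$ must be combined precisely when some arc $e$ lying in a $j$-pre-layer has no incremental copy $\inc(e,i)$ in $E$. This is the paper's $G_{\aux}$ construction, justified by Lemmas~\ref{lem:prec-inc} and~\ref{lem:prec-coarsen-inc}, and it is what makes the partition both intrinsic and equal to the one induced by any PFD of $H$ (cf.\ the proof of Theorem~\ref{thm:correctness-pfd}). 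Without replacing your criterion by an increment-type condition, the proof collapses at the step where you claim your partition coincides with the one coming from a PFD of $H$.
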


We will show, that the PFD of a hypergraph $H$ can be obtained
from the PFD of its 2-section $[H]_2$. For this the following lemma is crucial.

\begin{lemma}
If $\Box_{i\in I} H_i$ is an arbitrary factorization of $H$ it holds
that $[H]_2 = \Box_{i\in I} [H_i]_2$. 
\label{lem:Cart_2section}
\end{lemma}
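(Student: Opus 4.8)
The plan is to reduce the statement to the case of two factors and then verify equality of edge sets by a direct case analysis. Since both the Cartesian product of hypergraphs and the Cartesian product of graphs are associative, and the operation $[\cdot]_2$ does not alter the vertex set, I would argue by induction on $|I|$. The base case $|I|=1$ is trivial, and for the inductive step I would write $\Box_{i=1}^{n+1}H_i = (\Box_{i=1}^{n}H_i)\Box H_{n+1}$, apply the two-factor case to the outer product, and then the induction hypothesis to $\Box_{i=1}^{n}H_i$. Hence it suffices to prove $[H_1\Box H_2]_2 = [H_1]_2\Box [H_2]_2$.

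For two factors, note first that both graphs share the vertex set $V(H_1)\times V(H_2)$, since $V([H_i]_2)=V(H_i)$ and the graph product again uses the Cartesian set product of vertex sets. Thus it only remains to compare edges. I would fix two distinct vertices $a=(x,y)$ and $b=(x',y')$ and determine when they are adjacent in each graph. By the definition of $[\cdot]_2$, the vertices $a$ and $b$ are adjacent in $[H_1\Box H_2]_2$ precisely if some arc $e$ of $H_1\Box H_2$ satisfies $a,b\in V(e)$.

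The key structural observation, read off directly from the definition of $E(H_1\Box H_2)$, is that every arc $e$ has one of two shapes: either $V(e)=\{x_0\}\times V(f)$ for some $x_0\in V(H_1)$ and $f\in E(H_2)$ (type (i)), or $V(e)=V(g)\times\{y_0\}$ for some $g\in E(H_1)$ and $y_0\in V(H_2)$ (type (ii)). A type (i) arc contains both $a$ and $b$ if and only if $x=x'$ and $y,y'\in V(f)$, i.e.\ if and only if $x=x'$ and $yy'\in E([H_2]_2)$; symmetrically a type (ii) arc does so if and only if $y=y'$ and $xx'\in E([H_1]_2)$. Therefore $ab\in E([H_1\Box H_2]_2)$ if and only if $(x=x'$ and $yy'\in E([H_2]_2))$ or $(y=y'$ and $xx'\in E([H_1]_2))$, which is exactly the adjacency condition defining $E([H_1]_2\Box [H_2]_2)$. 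This establishes equality of the edge sets, and with it the lemma.

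The argument is essentially bookkeeping, so I do not expect a serious obstacle; the points to be careful about are to confirm both inclusions at once, which the chain of equivalences above does, and to check that the enumeration of arc shapes (i) and (ii) is exhaustive, so that no edge of the $2$-section is overlooked. The genuinely useful fact driving everything is that a single hyperarc of a Cartesian product varies in only one coordinate, so that passing to the $2$-section cannot create ``diagonal'' edges mixing two factors.
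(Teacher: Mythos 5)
Your proof is correct and takes essentially the same approach as the paper: reduce to two factors (the paper invokes commutativity and associativity where you spell out the induction explicitly), then establish equality of the edge sets by the same chain of equivalences, driven by the observation that every arc of $H_1\Box H_2$ has vertex set either $\{x_0\}\times V(f)$ or $V(g)\times\{y_0\}$. The only cosmetic difference is that the paper phrases the two-factor case via an isomorphism $\phi$ between $[H]_2$ and $[H_1]_2\Box [H_2]_2$, while you compare edge sets directly; the substance is identical.
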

\begin{proof}
	Since the Cartesian product is commutative and associative it suffices to 
	prove the statement for two factors. 
   Assume that $H=H_1\Box H_2$ and every vertex $x$ has coordinates $(x_1,x_2)$.
	Thus, there is an isomorphism
	$\phi:V(H)\to V(H_1\Box H_2)$ via $x\mapsto (x_1,x_2)$. 
	We show that $\phi$ is also an isomorphism 
	for the graphs $[H]_2$  and $[H_1]_2\Box [H_2]_2$.

	The edge $xy$ is contained in $E([H]_2)$ if and only if
	there is an arc $e\in E(H)$ with $x,y\in V(e) = t(e)\cup h(e)$ if and only if 
 	(i)  $x_1=y_1$ and $x_2y_2 \subset V(f) \in E(H_2)$ or 
	(ii) $x_2=y_2$ and $x_1y_1 \subset V(f) \in E(H_1)$ if and only if
 	(i)  $x_1=y_1$ and $x_2y_2 \in E([H_2]_2)$ or 
	(ii) $x_2=y_2$ and $x_1y_1 \in E([H_1]_2)$ if and only if
	the edge $\phi(x)\phi(y)$ is contained in $E([H_1]_2\Box [H_2]_2)$.
\end{proof}

Now, given the PFD of $H=\Box_{i\in I} H_i$, we can infer that
$[H]_2=\Box_{i\in I} [H_i]_2$. 
However, the factors $[H_i]_2$ might not be prime w.r.t. the Cartesian
\emph{graph} product and hence, $[H]_2$ might have more prime factors.
Since the PFD of $[H]_2 = \Box_{j\in J} G_j$ is unique it follows that the 2-section
of the prime factors $H_i$ of $H$ is a combination of the prime
factors $G_j$ of $[H]_2$, that is, 
$[H_i]_2 \simeq \Box_{j\in J'} G_j$, $J'\subseteq J$ for all 
$i\in I$.

Our algorithm will start with the PFD of $[H]_2=\Box_{j\in J} G_j$ 
w.r.t. the Cartesian product of undirected graphs and then tries to combine the respective 
prime factors of $[H]_2$ to reconstruct the prime factors of $H$. In other words, we need
to find suitable subsets $J'\subseteq J$ so that $[H_i]_2 \simeq \Box_{j\in J'} G_j$
and $H_i$ is a prime factor of $H$. 
To this end, we will introduce (pre-)coordinates and (pre)-layers.

\begin{defi}[(Factorization) Coordinatization]
Let $H$ be isomorphic to some product $\Box_{i \in I} H_i$, where
each factor $H_i$ has vertex set $\{1,\dots,l_i\}$. 
A \emph{factorization coordinatization} or \emph{coordinatization} for
short, is an isomorphism $\coord$ from $H$ to $\Box_{i \in I} H_i$. 
Thus, $\coord$ assigns to a vertex $v \in V(H)$ a vector of coordinates
$(v_i)_{i \in I}$ where $1 \leq v_i \leq l_i$ is a vertex in $V(H_i)$
\label{defi:coord}
\end{defi}

Hence, a coordinatization gives in an explicit way the information of the
underlying product structure of $H$. Hence, to find a factorization of $H$ one
can equivalently ask for a coordinatization of $H$, a fact that we will utilize
in our algorithm.  
Note that the coordinatization w.r.t.\ a given
product decomposition is unique up to relabeling the vertices in each factor
$H_i$. 

We will also need a notion which is similar to a coordinatization but 
is implied by a factorization of the $2$-section $[H]_2$ rather than a decomposition of $H$.

\begin{defi}[Pre-Coordinatization]
Let $H$ be a given hypergraph and assume that $[H]_2$ has a coordinatization 
$\coord\colon V([H]_2) \to \bigtimes_{i \in I} \{1,\dots,l_i\}$. 
Since $V([H]_2)=V(H)$ we infer that $\coord$ is a bijective 
map on $V(H)$ that assigns to each vertex $v\in V(H)$ a unique coordinate-vector 
$(v_i)_{i \in I}$ where $1 \leq v_i \leq l_i$. 
This map is called \emph{pre-coordinatization} of $H$. 
\label{defi:pre-coord}
\end{defi}

For convenience, we will usually omit the function $\coord$
and identify every vertex $v\in V(H)$ with its (pre-)coordinate vector, i.e., 
we will write $v = (v_i)_{i \in I}$ rather than $\coord(v)=(v_i)_{i \in I}$.

\begin{defi}[Layers and Pre-Layers]
Let  $H \simeq \Box_{i \in I} H_i$ with given respective coordinatization 
$v = (v_i)_{i \in I} \in V(H)$ and $I' \subseteq I$.
The \emph{$I'$-layer through $v$} (denoted by $H_{I'}^v$) with
respect to this coordinatization is the sub-hypergraph induced by the vertices
$\{u=(u_i)_{i \in I} \mid i \notin I' \implies u_i = v_i \}$, i.e., we fix all
coordinates except those contained in the set $I'$. 
Note, $H_{I'}^v \simeq \Box_{i\in I'} H_i$.

Analogously, as layers are defined by means of a coordinatization define
the \emph{pre-layers} by means of a pre-coordinatization.

For simplicity we write
$H_i^v$ instead of $H_{\{i\}}^v$ and $i$-(pre-)layer rather than $\{i\}$-(pre-)layers.
\label{defi:layer}
\end{defi}

For later reference, we need the following observation and lemma. 
If $H \simeq \Box_{i \in I} H_i$ and $e$ is an arc of $H$, 
then all vertices in $V(e)$ are contained in
the same $i$-layer $H_i^v$ for some $i \in I$ and $v\in V(H)$, 
i.e.\ they only differ in the $i$-th coordinate. The same is true
for pre-layers. 

\begin{lemma}
Let $H$ be a hypergraph and let $\coord$ be a pre-coordinatization of $H$. 
Then every arc $e$ of $H$ contains vertices of exactly one pre-layer w.r.t\ $\coord$, that is, 
all vertices in $V(e)$  only differ in the same $i$-th coordinate.
\label{lem:arc-in-prelayer}
\end{lemma}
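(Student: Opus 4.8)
The plan is to push the entire statement into the $2$-section $[H]_2$, where the pre-coordinatization is an honest coordinatization of a Cartesian \emph{graph} product, and then exploit the rigidity of complete subgraphs inside such a product. By Definition~\ref{defi:pre-coord}, the pre-coordinatization $\coord$ comes from an isomorphism $[H]_2 \simeq \Box_{i\in I} G_i$ for some graphs $G_i$ (note that we do \emph{not} need the $G_i$ to be prime), and the vector $(v_i)_{i\in I}$ assigned to each $v$ is precisely the coordinatization of this product. As recalled in the Preliminaries, every arc $e$ of $H$ becomes a complete graph on $V(e)$ in $[H]_2$. Hence it suffices to prove the purely graph-theoretic claim that any clique of a Cartesian product of graphs lies within a single $i$-layer, i.e.\ its vertices differ in one common coordinate $i$.

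First I would recall the edge rule for the Cartesian graph product: vertices $u=(u_i)_{i\in I}$ and $w=(w_i)_{i\in I}$ are adjacent in $\Box_{i\in I} G_i$ if and only if they differ in exactly one coordinate, say $u_k\neq w_k$, with $u_kw_k\in E(G_k)$. The heart of the argument is then a triangle observation: if $a,b,c$ are mutually adjacent, with $ab$ differing in coordinate $i$ and $bc$ differing in coordinate $j$, then either $i=j$ (and all three share the same differing coordinate), or $i\neq j$, in which case $a_i\neq b_i=c_i$ and $a_j=b_j\neq c_j$ force $a$ and $c$ to differ in \emph{both} coordinates $i$ and $j$. The latter contradicts adjacency of $a$ and $c$, so every triangle of the product lies in a single layer.

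To lift this to an arbitrary clique on $V(e)$, I would fix a vertex $v\in V(e)$ and, for each other $w\in V(e)$, let $i(w)$ be the unique coordinate in which $v$ and $w$ differ. Applying the triangle observation to $v,w,w'$ for any two further vertices $w,w'\in V(e)$ forces $i(w)=i(w')$; thus a single index $i$ satisfies $i(w)=i$ for all $w$. Consequently every vertex of $V(e)$ agrees with $v$, and hence with every other vertex of $V(e)$, in all coordinates except the $i$-th, so $V(e)$ is contained in the $i$-pre-layer through $v$. The case of a two-vertex arc is immediate, as its vertices already form an edge of $[H]_2$ and differ in exactly one coordinate.

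The main obstacle is the triangle observation itself; everything else is bookkeeping on coordinates. It is essential that the product is \emph{Cartesian}: its defining feature, that adjacent vertices differ in exactly one coordinate, is precisely what forbids a triangle from spanning two directions, and this local rigidity is what propagates to drive an entire clique, and therefore $V(e)$, into a single pre-layer.
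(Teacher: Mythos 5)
Your proof is correct, and it takes a genuinely more self-contained route than the paper. Both proofs share the same structural reduction: an arc $e$ of $H$ becomes a clique on $V(e)$ in $[H]_2$, the pre-coordinatization is by definition an honest coordinatization of a Cartesian graph product $\Box_{i\in I}G_i$, and so the statement reduces to showing that a clique in such a product lies in a single layer. Where you diverge is in how that last fact is established. The paper simply cites the known result that complete graphs are \emph{S-prime} (they cannot sit inside a nontrivial Cartesian product except within a single layer), referring to the literature for a proof of this stronger, general fact. You instead prove exactly the instance needed from first principles: using the edge rule of the Cartesian product (adjacent vertices differ in precisely one coordinate), you show any triangle must have all three edges differing in the same coordinate --- otherwise two of its vertices would differ in two coordinates, contradicting their adjacency --- and then propagate this over the clique by fixing a base vertex $v$ and applying the triangle observation to $v,w,w'$ for all pairs $w,w'\in V(e)$. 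The bookkeeping is sound: once every $w$ differs from $v$ only in the common coordinate $i$, any two vertices of $V(e)$ agree outside coordinate $i$, so $V(e)$ lies in the $i$-pre-layer through $v$. What your approach buys is a short, elementary, fully self-contained argument requiring no external references; what the paper's approach buys is brevity and an appeal to a strictly stronger tool (S-primeness covers arbitrary subgraph embeddings, not just the clique case), at the cost of leaning on outside literature.
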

\begin{proof}
Every hyperarc $e$ forms a complete subgraph $K_{|V(e)|}$
in $[H]_2$. Moreover, complete subgraphs must be contained entirely in one of the 
$i$-layers of $[H]_2$, as complete graphs are so-called S-prime graphs, see e.g.
\cite{Bresar:03,Hel-12,HGS-09,Klavzar:02}. Hence, each hyperarc is contained in
one $i$-pre-layer of $H$.
\end{proof}

Note, any isomorphism from $[H]_2$ to $\Box_{i\in I} [H_i]_2$ and thus,
a  pre-coordinatization of $H$, is a coordinatization of $H$ if and
only if $H$ has a factorization $\Box_{i\in I} H_i$. 
Lemma \ref{lem:Cart_2section} immediately implies that
every coordinatization of $H$ is also
a pre-coordinatization of $H$, while the converse is not true in general. 
On the other hand, we have the following result for so-called increments of coordinates.

\begin{defi}[Increments of Coordinates]
Given a pre-coordinatization $\coord\colon V(H) \to \bigtimes_{i \in I} \{1,\dots,l_i\}$,
of $H$ and a vertex $v=(v_1,\dots,v_i,\dots,v_k) \in V(H)$
we define $\inc(v,i)$ (w.r.t.\ $\coord$) as the vertex with 
coordinates $(v_1,\dots,v_i+1,\dots,v_k)$ where we set 
$v_i+1:=1$ if $v_i+1>l_i$. 

For an (ordered) set of vertices $W\subset V$ we define
the (ordered) set $\inc(W,i))=\{\inc(w,i)\mid w\in W\}$.

Finally, we denote for an arc $e=(t(e),h(e))$ its increment  
$(\inc(t(e),i), \inc(h(e),i))$ by $\inc(e,i)$. 
\end{defi}

\begin{lemma}
	Let $H =(V,E)$ be a hypergraph and $\coord\colon V\to \bigtimes_{i \in I} \{1,\dots,l_i\}$
	be a pre-coordinatization of $H$. 
	If for each arc $e\in E$ (where the vertices of $e$ differ only in the $j$-th coordinate),
	there is an arc $\inc(e,i) \in E$ for all $i\neq j$, 
	then $\coord$ is a coordinatization of $H$.
	\label{lem:prec-inc}
\end{lemma}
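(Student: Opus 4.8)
The plan is to construct explicit candidate factors $H_i$ (one for each coordinate $i\in I$) directly from the pre-coordinatization, and then to verify that $\coord$ is an isomorphism from $H$ onto $\Box_{i\in I}H_i$; by Definition \ref{defi:coord} this is precisely what it means for $\coord$ to be a coordinatization. To define $H_i$, I set $V(H_i)=\{1,\dots,l_i\}$ and declare the arcs of $H_i$ to be the coordinate projections of those arcs of $H$ that lie in an $i$-pre-layer. Concretely, if $e\in E$ has all its vertices differing only in the $i$-th coordinate (which by Lemma \ref{lem:arc-in-prelayer} happens for a unique $i$), I let $p_i(e):=(\{w_i\mid w\in t(e)\},\{w_i\mid w\in h(e)\})$ and put $E(H_i):=\{p_i(e)\mid e\in E \text{ lies in an } i\text{-pre-layer}\}$. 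Since the vertices of $e$ are distinct and agree in every coordinate except the $i$-th, they must already differ in the $i$-th coordinate, so $p_i(e)$ is a genuine loop-free arc on $V(H_i)$.

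First I would check the easy inclusion: every arc of $H$ maps under $\coord$ to an arc of the product. Indeed, if $e$ lies in the $i$-pre-layer through some vertex, then all vertices of $\coord(e)$ share every coordinate except the $i$-th, while their $i$-th coordinates form the arc $p_i(e)\in E(H_i)$. By the definition of the Cartesian product this is exactly a ``direction-$i$'' arc of $\Box_{i\in I}H_i$. Note that this direction uses only Lemma \ref{lem:arc-in-prelayer} and the construction of $H_i$, not the increment hypothesis.

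The main obstacle is the converse inclusion, and this is where the increment hypothesis enters. Let $\tilde e$ be an arbitrary arc of $\Box_{i\in I}H_i$; without loss of generality it is a direction-$i$ arc, so it has a fixed tuple of other coordinates $(v_\ell)_{\ell\neq i}$ together with an arc $a=(A,B)\in E(H_i)$ in the $i$-th coordinate. By construction $a=p_i(f)$ for some arc $f\in E$ lying in an $i$-pre-layer; however $f$ will in general sit in the ``wrong'' pre-layer, i.e.\ its other coordinates $(u_\ell)_{\ell\neq i}$ need not equal $(v_\ell)_{\ell\neq i}$. The idea is to transport $f$ to the correct pre-layer by repeatedly applying increments in the coordinates $\ell\neq i$. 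Because all vertices of $f$ agree in each coordinate $\ell\neq i$, a single application of $\inc(\cdot,\ell)$ increments that common value cyclically by one for all vertices simultaneously and, by the hypothesis, again yields an arc of $H$; crucially it leaves the $i$-th coordinates, and hence the projection $p_i$, unchanged. Applying $\inc(\cdot,\ell)$ exactly $(v_\ell-u_\ell)\bmod l_\ell$ times for each $\ell\neq i$ (increments in distinct coordinates commute) transforms $f$ into an arc $f'\in E$ whose other coordinates are precisely $(v_\ell)_{\ell\neq i}$ and with $p_i(f')=p_i(f)=a$. Thus $\coord(f')=\tilde e$, so $\tilde e$ is the image of an arc of $H$.

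Combining both inclusions shows that $\coord$ induces a bijection between $E(H)$ and $E(\Box_{i\in I}H_i)$; since $\coord$ is already a vertex bijection respecting heads and tails, it is an isomorphism $H\to\Box_{i\in I}H_i$ and therefore a coordinatization of $H$, as claimed. I expect the only delicate points to be bookkeeping: verifying that the increments act coherently on all vertices of an arc (which follows from the arc lying in a single pre-layer) and that the cyclic wrap-around in the definition of $\inc$ still lets us reach every target value in $\{1,\dots,l_\ell\}$.
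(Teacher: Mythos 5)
Your proof is correct, and it reaches the conclusion by a genuinely more explicit route than the paper, although the engine driving both arguments is the same: the increment hypothesis can be iterated, so an arc lying in an $i$-pre-layer can be pushed, one cyclic step at a time and staying inside $E$ throughout, into any other $i$-pre-layer. The paper uses this engine to compare whole pre-layers --- increments induce homomorphisms between ``consecutive'' $j$-pre-layers, wrapping around after $l_i$ steps forces these to be isomorphisms $g_i^{vw}$, and the proof then closes with a terse contradiction argument that $\coord$ must be a coordinatization. You instead build candidate factors $H_i$ explicitly, as the coordinate projections $p_i(e)$ of the arcs lying in $i$-pre-layers, and verify Definition \ref{defi:coord} head-on: the inclusion $\coord(E(H)) \subseteq E(\Box_{i\in I} H_i)$ follows from Lemma \ref{lem:arc-in-prelayer} and the definition of the product, and the reverse inclusion is exactly your transport-by-increments argument. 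What your packaging buys is that the delicate last step becomes fully explicit: the paper's final sentence leaves the passage from ``all parallel pre-layers are isomorphic via coordinate-respecting maps'' to ``$\coord$ is an isomorphism onto a product'' essentially implicit, whereas your two-sided arc-set equality, together with the fact that $\coord$ is a vertex bijection preserving tails and heads, is literally the definition of an isomorphism onto $\Box_{i\in I} H_i$, hence of a coordinatization. What the paper's packaging buys is the family of layer isomorphisms $g_i^{vw}$ and the accompanying geometric picture that all copies of a prospective factor look alike; your proof produces the factor itself instead. The bookkeeping points you flag at the end --- that an increment in a coordinate $\ell \neq i$ keeps an arc inside an $i$-pre-layer, so the hypothesis applies again at the next step, and that the cyclic wrap-around realizes every prescribed value of each coordinate --- are exactly the right things to check, and your treatment of them is sound.
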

\begin{proof}
	By Lemma	\ref{lem:arc-in-prelayer}, 
	all vertices within one arc $e\in E$ differ in
	precisely one coordinate. 
	Let $e\in E$ be an arbitrary hyperarc and assume the vertices differ in the $j$-th coordinate. 

	Let $\mathcal{H}_j$ be the set of $j$-pre-layers contained in $H$. 
	Let $i\neq j$ be an arbitrary index $i\in I$. Assume that for each hyperedge $e$
	contained in some $j$-pre-layer $H(1)$ all ``incremental copies''
	$\inc(e,i)$ are also contained in $E$, then there is a homomorphism from
	$H(1) = \langle V(H(1))\rangle$ to $H(2)=\langle(inc(V(H(1)),i))\rangle$,
	where $H(2)$ corresponds to some other $j$-pre-layer. Assume that for all such
	``consecutive'' $j$-pre-layer there is a homomorphism from $H(l)$ to
	$H(l+1)$, $1\leq l\leq l_i-1$. By construction, after $l_i-1$ incremental
	steps we arrive at the $l_i$-th $j$-pre-layer $H(l_i)$ and hence,
	$H(1)=\langle(inc(V(H(l_i)),i))\rangle$. If there is an homomorphism from $H(l_i)$
	to $H(1)$, then there is trivially an isomorphism between all
	such $j$-pre-layers $H(1),H(2),\dots,H(l_i)\in\mathcal H_j$. 
	Thus, if for all arcs $e\in E$, where the
	vertices of $e$ differ precisely in this $j$-th coordinate, there is a hyperarc
	$\inc(e,i) \in E$ for all $i\neq j$, then there isomorphism between all
	$j$-pre-layers contained in $\mathcal H_j$ for this fixed $j\in I$. 

	If this is true for \emph{all} arcs $e\in E$, and thus, for 
	\emph{all} $i$-pre-layers with $i\in I$, then \emph{all} 
	such $i$-pre-layers are isomorphic for each $i\in I$. 

	In particular, we can define for vertices $v,w$ and an index $i\in I$
	the map $g_i^{vw}\colon H_j^v \to H_j^w$ which maps every vertex in $H_i^v$ to the
	unique vertex in $H_i^w$ with the same $i$-coordinate. 
	By the preceding arguments, 
	for each $i\in J$ the map $g_i^{vw}$ is an isomorphism 
	between the $i$-pre-layers in $H$ for all $v, w \in V(H)$. 

	Finally, assume for contradiction that
	$\coord$ is a not a coordinatization	
	and hence,  $\coord$ is not an isomorphism from $H$ to any product 
	$\Box_{i \in I} H_i$. Hence, there must be some $i\in I$ such
	that not all $i$-layers are isomorphic by means of 
	$g_i^{vw}$, a contradiction.
\end{proof}

Lemma \ref{lem:arc-in-prelayer} allows defining an equivalence relation
$R_\coord$ on the hyperedge set $E(H)$ for a given pre-coordinatization
$\coord\colon V(H) \to \bigtimes_{i \in I} \{1,\dots,l_i\}$ of $H$, as follows:
$(e,f)\in R_\coord$ if $e\in H_i^v$ and $f\in H_i^w$ for some $i\in I$ and
$v,w\in V(H)$. In other words, $e$ and $f$ are in relation $R_\coord$ if they
are both contained in the $i$-pre-layers for the same fixed $i\in I$. Note, in
case that $\coord$ is a coordinatization the relation $R_\coord$ is also known
as product relation, that is, each equivalence of $R_{\coord}$ contains the
hyperedges of all copies of some (not necessarily prime) factor of $H$. In order
to avoid confusion, we sometimes write that $R_{\coord}(H)$ to indicate that
$R_{\coord}$ is defined on the edge set of $H$.

Given two
pre-coordinatizations $\coord_1$ and $\coord_2$, we say that $\coord_1$ is finer
than $\coord_2$, while $\coord_2$ coarser than $\coord_1$ if $R_{\coord_1}$ is
finer than $R_{\coord_2}$.
We can immediately infer the next result. 

\begin{lemma}
Let $H$ be a hypergraph and let $\coord_1$ and $\coord_2$ be
pre-coordinatizations of $H$. Then $\coord_1$ is finer than $\coord_2$ if and
only if the factorization of $[H]_2$ corresponding to $\coord_2$ can be obtained
from the factorization $\Box_{i \in I} G_i$ corresponding to $\coord_1$ by
combining some of these factors, i.e., $I$ can be partitioned into 
$I_1,\dots,I_l$ so that the pre-coordinatization $\coord_2$ is 
an isomporphism between $[H]_2$ and $\Box_{j=1}^l G'_j$ where 
$G'_j \simeq  \Box_{i \in I_j} G_i$.
\label{lem:prec-partition}
\end{lemma}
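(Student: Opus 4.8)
The plan is to prove the biconditional in Lemma~\ref{lem:prec-partition} by unwinding the definitions of both sides in terms of the product relations $R_{\coord_1}$ and $R_{\coord_2}$ on $E([H]_2)$, and then invoking the uniqueness of the PFD of $[H]_2$ (Theorem~\ref{thm:upfd} applied to the undirected graph $[H]_2$). The central observation is that each pre-coordinatization $\coord$ of $H$ is, by Definition~\ref{defi:pre-coord}, exactly a coordinatization of the $2$-section $[H]_2$, i.e.\ an isomorphism $[H]_2 \simeq \Box_{i\in I}G_i$. Hence the relation $R_{\coord}$, restricted to the edges of $[H]_2$, is precisely the product relation associated with that factorization: two edges are $R_{\coord}$-related iff they lie in layers belonging to the same factor. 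This lets me translate the statement ``$\coord_1$ finer than $\coord_2$'' (which by our convention means $R_{\coord_1}\subseteq R_{\coord_2}$) into a purely combinatorial statement about how the two factorizations of $[H]_2$ refine one another.

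First I would prove the direction that assumes the factorizations can be combined. Suppose $I$ is partitioned into $I_1,\dots,I_l$ and $\coord_2$ realizes $[H]_2 \simeq \Box_{j=1}^{l}G'_j$ with $G'_j \simeq \Box_{i\in I_j}G_i$. Then the $j$-layers of the coarser factorization are unions of the $i$-layers (with $i\in I_j$) of the finer one, so every edge in some $\{i\}$-layer of $\coord_1$ lies in the corresponding $j$-layer of $\coord_2$. Thus if $(e,f)\in R_{\coord_1}$, meaning $e,f$ lie in $i$-layers for a common $i$, then both lie in the same $j$-layer with $i\in I_j$, giving $(e,f)\in R_{\coord_2}$. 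Hence $R_{\coord_1}\subseteq R_{\coord_2}$, i.e.\ $\coord_1$ is finer than $\coord_2$. This direction is essentially bookkeeping: the key is to note that fixing all coordinates outside $I_j$ is the same as fixing all coordinates outside each singleton $\{i\}$ with $i\in I_j$ simultaneously, so the layer structure nests as claimed.

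For the converse, I would assume $\coord_1$ is finer than $\coord_2$, so $R_{\coord_1}\subseteq R_{\coord_2}$, where $\coord_1$ corresponds to $[H]_2\simeq\Box_{i\in I}G_i$. The finer product relation $R_{\coord_1}$ partitions $E([H]_2)$ into classes $\{\chi_i\}_{i\in I}$, one per factor $G_i$, and likewise $R_{\coord_2}$ partitions it into classes $\{\psi_j\}_{j}$. Since $R_{\coord_1}$ is finer, each $\psi_j$ is a union of certain $\chi_i$'s; define $I_j=\{i\mid \chi_i\subseteq\psi_j\}$, yielding a partition $I_1,\dots,I_l$ of $I$. It remains to identify the factor $G'_j$ of $\coord_2$ indexed by $\psi_j$ with $\Box_{i\in I_j}G_i$. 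Here I would appeal to the uniqueness of the PFD of $[H]_2$: refining the graph factorization $\Box_{j}G'_j$ by its prime factors and doing the same for $\Box_i G_i$ must yield the same multiset of prime graph factors (up to isomorphism and order), and the inclusion $\chi_i\subseteq\psi_j$ forces the prime factors of each $G_i$ with $i\in I_j$ to be exactly the prime factors comprising $G'_j$. Consequently $G'_j\simeq\Box_{i\in I_j}G_i$, and $\coord_2$ is the isomorphism $[H]_2\simeq\Box_{j=1}^{l}G'_j$ as required.

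The main obstacle I anticipate is the converse direction, specifically the passage from the set-theoretic inclusion of product relations to an honest isomorphism $G'_j\simeq\Box_{i\in I_j}G_i$ \emph{of graphs}. Knowing that the edge-class $\psi_j$ is the union of the classes $\chi_i$ for $i\in I_j$ tells us how the edges group together, but to conclude that the corresponding subproduct of $\coord_1$ equals the factor of $\coord_2$ up to isomorphism one must argue that layers sharing the same coordinates outside $I_j$ assemble into a sub-hypergraph isomorphic to $\Box_{i\in I_j}G_i$. I expect to handle this cleanly by reducing both factorizations to their common prime refinement via Theorem~\ref{thm:upfd} (uniqueness of the PFD for the connected graph $[H]_2$), so that the grouping of prime factors dictated by the partition $I_1,\dots,I_l$ recovers the $G'_j$ directly. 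Care is needed to verify that the isomorphisms respect the shared coordinatization on $V([H]_2)=V(H)$, but this follows because both $\coord_1$ and $\coord_2$ are bijections on the same vertex set inducing compatible projections onto the prime factors.
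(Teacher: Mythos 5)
Your proof is correct and follows essentially the same route as the paper's: both reduce ``finer'' to the inclusion of the product relations $R_{\coord_1}\subseteq R_{\coord_2}$ and then invoke the definition of the Cartesian product together with the uniqueness of the PFD of $[H]_2$ (Theorem~\ref{thm:upfd}) to identify the coarser factorization as a grouping of the finer one's factors. The paper's own proof is a terse two-sentence version of exactly this argument; your write-up simply fills in the bookkeeping for both directions.
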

\begin{proof}
 By definition $\coord_1$ is finer than  $\coord_2$ if and only if
 $R_{\coord_1}$ is finer than $R_{\coord_2}$. 
 By definition of the Cartesian product and by Theorem \ref{thm:upfd}
 this is the case if and only if every factor in the factorization 
 corresponding to $\coord_2$ 
 is a combination of the factors in the coordinatization
 $\coord_1$.
\end{proof}

\begin{rem}
By Lemma \ref{lem:Cart_2section}, every coordinatization of $H$ is also a
pre-coordinatization of $H$. This implies together with Lemma
\ref{lem:prec-partition} that \emph{every} pre-coordinatization of $H$ can be
achieved by starting with the pre-coordinatization w.r.t.\ the prime
factorization of $[H]_2$ and then combining the corresponding pre-layers of $H$
to obtain the layers w.r.t.\ the prime factorization of $H$. In other words, one
needs to find a partition $\biguplus_{j \in J} I_j$ of the index set $I$ and
then combine all pre-layers corresponding to indices in the same part into one.
\label{rem:combine}
\end{rem}

As we shall see later, in our algorithm we will \emph{only} check increments w.r.t.\ the
pre-coordinatization $\coord'$ coming from the PFD of $[H]_2$. 
However, we have to prove that this is indeed sufficient (Theorem \ref{thm:correctness-pfd}).
In order to apply Lemma
\ref{lem:prec-inc} to validate whether we end up with a coordinatization $\coord$ of 
$H$ we would need to check increments with respect to this coarser
(pre-)coordinatization $\coord$. Now one might hope that increments with respect to the
coarser pre-coordinatization are automatically increments with respect to the
finer pre-coordinatization $\coord'$ or that at least the coarser pre-coordinates can be
chosen in a suitable way. However, this is not the case as the following example
shows.

Assume that at some point we need to combine $i$- and $j$-pre-layers of sizes
$3$ and $4$ respectively. The resulting $k$-pre-layers with respect to the new
coordinatization will each contain $12$ vertices labeled $1, \ldots , 12$. We now
claim that no matter how we assign the new labels, there is always at least one
increment $\inc(\cdot,k)$ which is not an increment $\inc(\cdot,i)$ or
$\inc(\cdot,j)$. Assume for a contradiction that all increments $\inc(\cdot,k)$
were either of the form $\inc(\cdot,i)$ or $\inc(\cdot,j)$. By applying
$\inc(\cdot,k)$ recursively $12$ times to a vertex, we end up at the same vertex
again. This means, that we have applied $\inc(\cdot,i)$ a number of times which
must be divisible by $3$ and $\inc(\cdot,j)$ a number of times which must be
divisible by $4$. However, no suitable multiples of $3$ and $4$ add up to $12$.

The latter example shows that the single check of increments with respect to the
PFD of $[H]_2$ is not sufficient to invoke Lemma \ref{lem:prec-inc} to conclude
that some coarser pre-coordinatization is indeed a coordinatization. For this
purpose,  we need the following additional lemma.

\begin{lemma}
	Let $H =(V,E)$ be a hypergraph, let $\coord_1,\coord_2$ be
	pre-coordinatizations of $H$ such that $\coord_1$ is finer than $\coord_2$.
	Let $\inc_1$ and $\inc_2$ the respective increment maps. Assume that for each
	arc $e\in E$ (where the vertices of $e$ differ only in the $j$-th coordinate
	w.r.t.\ $\coord_2$), there is an arc $\inc_1(e,i) \in E$ for all $i\notin
	I_j$ (where $I_j$ is defined as in Lemma \ref{lem:prec-partition}). 
	Then there is an arc $\inc_2(e,k)$
	for all $k \neq j$ and hence $\coord_2$ is a coordinatization of $H$.
	\label{lem:prec-coarsen-inc}
\end{lemma}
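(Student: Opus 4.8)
The plan is to reduce everything to Lemma~\ref{lem:prec-inc} applied to the coarser pre-coordinatization $\coord_2$: it suffices to prove that for \emph{every} arc $e\in E$ whose vertices differ only in the $j$-th coordinate w.r.t.\ $\coord_2$, and for every $k\neq j$, the incremented arc $\inc_2(e,k)$ again lies in $E$; then $\coord_2$ is a coordinatization by Lemma~\ref{lem:prec-inc}. Throughout I would use the partition $I=\biguplus_{j} I_j$ supplied by Lemma~\ref{lem:prec-partition}, under which the $j$-th factor of $\coord_2$ satisfies $G'_j\simeq\Box_{i\in I_j} G_i$; thus the $j$-th $\coord_2$-coordinate records exactly the block $I_j$ of $\coord_1$-coordinates.

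First I would translate the target increment into the finer coordinatization. Fix such an arc $e$ and an index $k\neq j$. By Lemma~\ref{lem:arc-in-prelayer} (applied to $\coord_1$) the vertices of $e$ differ in a single $\coord_1$-coordinate $m$, and since $e$ lives in the $j$-th $\coord_2$-coordinate we have $m\in I_j$. Because $G'_k\simeq\Box_{i\in I_k}G_i$, incrementing the $k$-th $\coord_2$-coordinate corresponds, under this isomorphism, to applying the cyclic shift $\sigma$ of $V(G'_k)$ to the block of $\coord_1$-coordinates indexed by $I_k$, while leaving every other $\coord_1$-coordinate, in particular the coordinate $m$, unchanged. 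As $m\notin I_k$ (the blocks $I_j$ and $I_k$ are disjoint), all vertices of $e$ share one and the same $I_k$-tuple $\tau_0\in\bigtimes_{i\in I_k}\{1,\dots,l_i\}$, and $\inc_2(e,k)$ is precisely the copy of $e$ whose $I_k$-tuple is $\sigma(\tau_0)$, everything else being fixed.

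The key step is then to realise $\sigma(\tau_0)$ through within-block increments of $\coord_1$. The single increments $\inc_1(\cdot,i)$, $i\in I_k$, are commuting cyclic shifts of the individual coordinates, so the group $\Gamma$ they generate is isomorphic to $\prod_{i\in I_k}\mathbb{Z}/l_i\mathbb{Z}$ and acts transitively on $\bigtimes_{i\in I_k}\{1,\dots,l_i\}$. Hence there is a composition $\gamma\in\Gamma$ of such increments with $\gamma(\tau_0)=\sigma(\tau_0)$. I would then argue by iterating the hypothesis: applying any $\inc_1(\cdot,i)$ with $i\in I_k$ to an arc living in $\coord_1$-coordinate $m\in I_j$ does not change $m$, so the result still lives in coordinate $m\in I_j$; and since $i\in I_k$ means $i\notin I_j$, the hypothesis guarantees this incremented arc is in $E$. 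Applying this step by step along $\gamma$ keeps every intermediate arc in $E$, and the final arc is exactly the copy of $e$ with $I_k$-tuple $\gamma(\tau_0)=\sigma(\tau_0)$, i.e.\ $\inc_2(e,k)$. Therefore $\inc_2(e,k)\in E$, and Lemma~\ref{lem:prec-inc} finishes the proof.

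I expect the main obstacle to be the point flagged by the $3$-$4$-$12$ example preceding the lemma: $\inc_2(\cdot,k)$ is generally \emph{not} a single increment $\inc_1(\cdot,i)$, nor a canonical product of them, because $\sigma$ depends on the arbitrary relabelling hidden in $G'_k\simeq\Box_{i\in I_k}G_i$. The resolution is that one never needs $\inc_2(\cdot,k)$ to coincide with a product of $\coord_1$-increments globally; it is enough that the abelian group $\Gamma$ acts transitively, so that $\sigma$ can be matched by some $\gamma\in\Gamma$ on the \emph{one} relevant tuple $\tau_0$. The care points to get right are (i) that all vertices of $e$ genuinely share the tuple $\tau_0$ (using $I_j\cap I_k=\emptyset$), and (ii) that the hypothesis may be reapplied at each step because every intermediate arc remains anchored in coordinate $m\in I_j$.
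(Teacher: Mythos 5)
Your proof is correct and takes essentially the same route as the paper's: the paper likewise realises $\inc_2(e,k)$ by successive applications of $\inc_1(\cdot,i)$ with $i\notin I_j$, using the hypothesis at each step to stay inside $E$, and then concludes via Lemma~\ref{lem:prec-inc}. Your version simply makes explicit what the paper's terse argument leaves implicit, namely that all vertices of $e$ share one $I_k$-tuple, that the commuting cyclic increments act transitively on such tuples (so the relabelling permutation $\sigma$ can be matched at the one relevant tuple), and that every intermediate arc still differs only in the $j$-th $\coord_2$-coordinate so the hypothesis may be reapplied.
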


\begin{proof}
The coordinates of the vertices in $\inc_2(e,k)$ w.r.t.\ $\coord_1$ can be
obtained from those of vertices in $e$ by only changing coordinates outside
$I_j$. This can be achieved by successive applications of $\inc_1(\cdot ,i)$ for
$i\notin I_j$. Since we started at an edge (namely $e$) and each of those
applications takes edges to edges we also end at an edge $\inc_2(e,k)$. Hence, 
Lemma \ref{lem:prec-inc} implies that $\coord_2$ is a coordinatization.
\end{proof}

\section{PFD-algorithm for Directed Hypergraphs}

\subsection{Workflow}

We give here a summary of the workflow of the algorithm to compute the
prime factor of connected directed hypergraphs.
The top-level control structure is summarized in Algorithm \ref{alg:pfd}
\PFDalg\ in which the subroutines 
\texttt{Preprocessing} (Alg.\ \ref{alg:pre}) and
\texttt{Combine} (Alg.\ \ref{alg:combine}) are used. 

\begin{algorithm}[htbp]
\caption{\PFDalg} 
\label{alg:pfd}
\begin{algorithmic}[1]
\small
    \STATE \textbf{INPUT:} A hypergraph $H=(V,E)$;
	 \STATE $H\gets\ $\texttt{Preprocessing}($H$); \COMMENT{Now, $H=(V_{\lex},E_{\lex})$ and vertex coordinates are known.}	 \label{alg:prepr}
	 \STATE Let $G_{\aux} = (\{1,\dots,k\}, \emptyset)$ \COMMENT{Here, $k$ is the number of prime factors of $[H]_2$.};\label{alg:init-Gaux}
	\FOR {each $e\in	 E_\lex$} \label{alg:for-start}
			\STATE Let $j\in \{1,\dots,k\}$ be the \emph{(unique)} coordinate where distinct $x,y\in V(e)$ differ; \label{alg:com1}
			\FOR{$i\in I=\{1,\dots,k\}\setminus\{j\}$} \label{alg:for2-begin}
					\IF {$\inc(e,i)\notin E_\lex$} \label{alg:if}
						\STATE add edge $ij$ to  $G_{\aux}$;
					\ENDIF
			\ENDFOR\label{alg:for2-end}
	 \ENDFOR \label{alg:for-end}
		\STATE \texttt{Combine}($H=(V_{\lex},E_{\lex}), G_\aux$); \label{alg:call-combine}
	 \STATE \textbf{OUTPUT:} PFD $\Box_{i=1}^n H_i$ of $H$;
\end{algorithmic}
\end{algorithm}
%
\begin{algorithm}[htbp]
\caption{\texttt{Preprocessing}} 
\label{alg:pre}
\begin{algorithmic}[1]
\small
    \STATE \textbf{INPUT:} A connected hypergraph $H=(V,E)$;
	 \STATE Compute $\Box$-PFD of $[H]_2 = \Box_{i=1}^k G_i$ and vertex-coordinates with the Imrich-Peterin-Algorithm \cite{Imrich07:linear}; \label{alg:2sect-pfd}
	 \STATE Compute the  list  $V_{\lex}$ of lexicographic ordered vertices (w.r.t.\ their coordinates);  \label{alg:lexV}
	 \FOR {each $e\in E$} \label{alg:for1-begin}
		\STATE Reorder $t(e)$ and $h(e)$ w.r.t.\ the lexicographic order of the vertices;
	 \ENDFOR \label{alg:for1-end}
	 \STATE Compute the list $E_{\lex}$ of lexicographic ordered arcs; 
			  w.r.t.\ to the lexicographic ordered sets $t(e)$ and then $h(e)$;
	\label{alg:lexE}
    \STATE \textbf{return} $(V_{\lex},E_{\lex})$ with respective vertex-coordinates;
\end{algorithmic}
\end{algorithm}
%
%
\begin{algorithm}[htbp]
\caption{\texttt{Combine}} 
\label{alg:combine}
\begin{algorithmic}[1]
\small
    \STATE \textbf{INPUT:} A hypergraph $H=(V,E)$ with pre-coordinates w.r.t\ the PFD of $[H]_2 $, a graph $G_\aux$;
		\STATE Compute connected components $I_1,\dots,I_l$ of $G_{\aux}$; \label{alg3:conn}
		\FOR {each $e\in	 E_\lex$} \label{alg3:fora}
			\STATE Let $j\in \{1,\dots,k\}$ be the \emph{(unique)} coordinate where distinct $x,y\in V(e)$ differ;
			\STATE Let $I_l$ be the connected component containing vertex $j$;
			\STATE Assign color $l$ to hyperedge $e$;
		\ENDFOR \label{alg3:fore}
		\STATE compute $[H]_2$ where edges $ij$ obtains the unique color of $e$ where $i,j\in V(e)$; \label{alg3:h2}
		\STATE compute coordinates of all vertices in $V$ in $[H]_2$; \COMMENT{This pre-coordinatization is the PFD-coordinatization of $H$} \label{alg3:coord}
	 \STATE \textbf{return} $\Box$-PFD $\Box_{i=1}^n H_i$ of $H$;
\end{algorithmic}
\end{algorithm}

As input of \PFDalg\  a connected hypergraph $H=(V,E)$ is
expected. First of all, subroutine \texttt{Preprocessing} is called. Here, the
PFD of $[H]_2 = \Box_{i=1}^k G_i$ and the respective coordinatization $\coord$
of $[H]_2$ is computed by application of the algorithm of Imrich and Peterin
\cite{Imrich07:linear}.
 Then the
vertices, the vertices within the arcs and the arcs are ordered in
lexicographic order. This helps to achieve the desired time-complexity in later steps.

By definition, coordinatization $\coord$ of $[H]_2$ is a pre-coordinatization of
$H$. By construction of $\coord$ and Lemma \ref{lem:Cart_2section}, the
pre-coordinatization $\coord$ is at least as fine as the coordinatization of $H$
w.r.t.\ its PFD. By Remark \ref{rem:combine} it suffices to find a suitable
partition of $I=\{1,\dots,k\}$ to derive the prime factors of $H$. 
To this end, we initialize in Line
\ref{alg:init-Gaux} of Algorithm\ \ref{alg:pfd} the auxiliary graph $G_{\aux}$,
where each vertex $i$ represents an element of $I$. The edge set is left empty.
We might later add edges in order trace back which equivalence classes of
$R_\coord$ have to be combined, i.e., all vertices within one connected components of $G_{\aux}$
will then be in one class of the respective partition of $I$.

We continue to check in the for-loop in Line \ref{alg:for-start}-\ref{alg:for-end}
of Algorithm\ \ref{alg:pfd}, if for each arc $e\in E_{\lex}$ that is contained in
some $j$-layer its ``copies'' are also contained in ``incremental-neighboring''
$j$-layers, i.e., we check if $\inc(e,i)\in E_\lex$. If this is not the case, 
then we add the edge $ij$ to $G_{\aux}$. 
Finally, we use the information of the connected components $I_1,\dots,I_l$ of 
$G_{\aux}$ that partition the set $I$ in order to determine the prime factors
of the given hypergraph $H$. To this end, the subroutine \texttt{Combine} is called
and an edge-colored 2-section $[H]_2$ is computed. That is, each edge $e\in E([H]_2)$ that
is contained in the copy of factor $G_i$ with $i\in I_s$ obtains color $s$. 
In other words, all prime factors $G_j, j\in I_s$ of $[H]_2$ are combined
to a single factor of $[H]_2$ and the edges in the respective $I_s$-layers obtain color $s$. 
W.r.t.\ this coloring it is possible to efficiently determine 
new vertex coordinates in  $[H]_2$ which is then a factorization coordinatization
of $H$. We will show, that this leads to a ``finest'' coordinatization of $H$
and hence, to the prime factors $H_1,\dots,H_r$ of $H$.
	
For illustrative examples see Figures \ref{fig:exmpl1} and \ref{fig:exmpl2}.

\begin{table}[htbp]
\begin{figure}[H]
	\centering{
	\includegraphics[bb=0 0 491 144, scale=.9]{./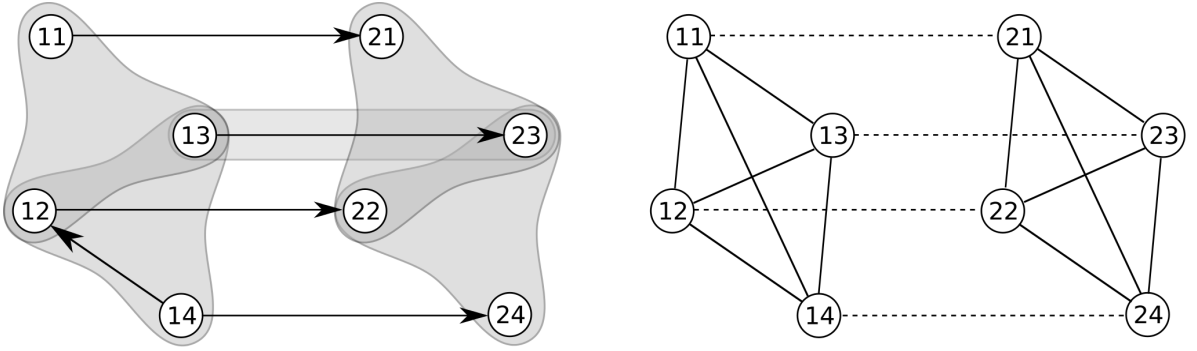}
	}
	\caption{The prime hypergraph $H$ (left-hand side) with arc set 
				$E(H)=\{ {e_1=(\{11,12,13\},\{12,13,14\})},KA:
							{e_2=(\{21,22,23\},\{22,23,24\})},
							e_3=(\{14\},\{12\}),
							e_4=(\{11\},\{21\}),
            			{e_5=(\{12\},\{22\})},
							{e_6=(\{13,23\},\{13,23\})},
							{e_7=(\{14\},\{24\})}, 
							{e_8=(\{13\},\{23\})}
						 \}$ 
				admits a non-trivial prime factorization of its 2-section $[H]_2$ (right-hand side)
				into $K_4\Box K_2$. The vertices of $H$ are labeled w.r.t.\ its pre-coordinatization 
				given by the coordinatization of $[H]_2$. For the first arc $e_4$ of
				the lexicographic ordered arc set $E_\lex  = \{e_4,e_1,e_5,e_8,e_6,e_3,e_7,e_2\}$
				we can observe that the increment $\inc(e_4,2) = ((\{12\}, \{22\}) = e_5 \in E_\lex$.
				Analogously, $\inc(e_1,1)=e_2$, $\inc(e_5,2)=e_8$, and $\inc(e_8,2)=e_7$ 
				are all contained in $E_\lex$. However, when we arrive at the hyperedge $e_6$
				we obtain that $\inc(e_6,2) = (\{14,24\},\{14,24\})$ is not an arc of $E_\lex$. 
				Hence, applying Algorithm \ref{alg:pfd} would lead to an edge $12$ in $G_\aux$, resulting in a connected auxiliary graph
				and, $H$ would be determined as prime. 
				A second example for an arc $e\in E_\lex$ with $\inc(e,i)\notin E_\lex$ is the edge $e=e_7$.
				}
	\label{fig:exmpl1}
\end{figure}

\begin{figure}[H]
	\centering{
	\includegraphics[bb=0 0 409 300, scale=.9]{./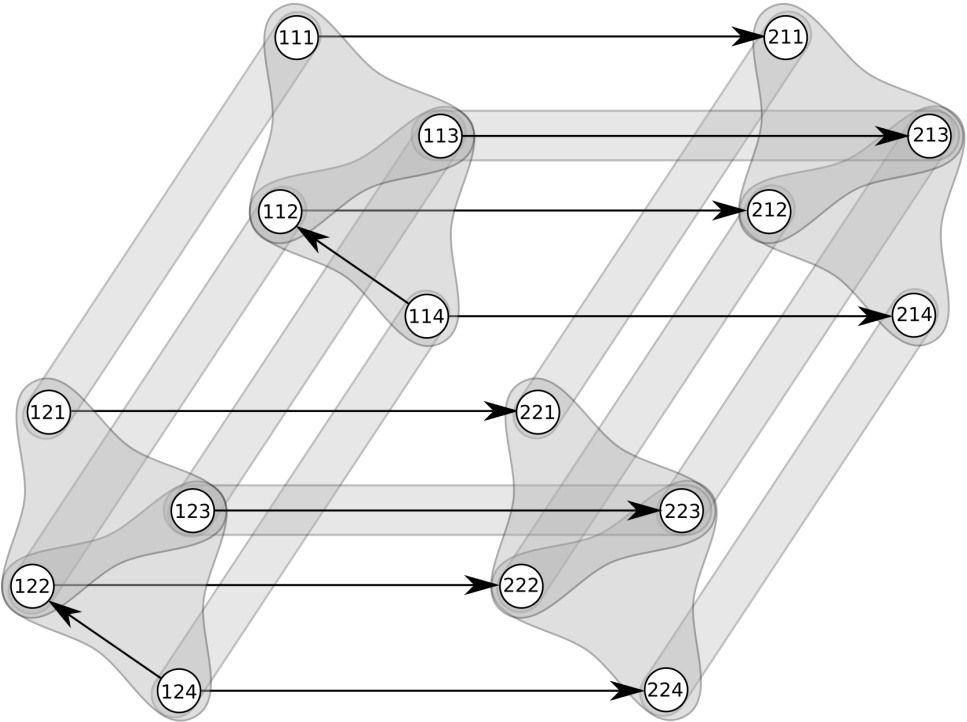}
	}
	\caption{The non-prime hypergraph $H = H_1 \Box H_2$ 
				is the product of the directed prime hypergraph $H_1$ in Fig.\ \ref{fig:exmpl1} and
				an undirected hypergraph $H_2$ with two vertices and one hyperedge. 
				The vertices of $H$ are labeled w.r.t.\ its pre-coordinatization 
				given by the coordinatization of $[H]_2$. The auxiliary graph $G_\aux$ is initialized
				as the graph with three vertices and empty edge-set in Algorithm \ref{alg:pfd}.
				While the increment $\inc(e,2) = \{(123,223),(123,223)\}$ of the arc $e=\{(113,213),(113,213)\}$ is 
				still contained in $E_\lex$, the increment $\inc(e,3) = \{(124,224),(124,224)\}$ is not. 
				Hence, the edge $13$ is added to $G_\aux$. Since the increments of all hyperedges of the form
				$\{(i1j,i2j),(i1j,i2j)\}$ are contained in $E_\lex$, no further edges will be added 
				to $G_\aux$. Hence, the sub-hypergraph induced by vertices with identical 2nd coordinate, 
				i.e., the $\{1,3\}$-layers, constitute the copies the prime factor $H_1$, while the
				sub-hypergraphs induced by vertices with identical 1st and 3rd coordinate, i.e, the $2$-layers,
				are copies of the prime factor $H_2$.
				}
	\label{fig:exmpl2}
\end{figure}
\end{table}

\subsection{Correctness}

We are now in the position to prove the correctness of the algorithm \PFDalg,
summarized in the following theorem.

%

\begin{thm}
	Algorithm \ref{alg:pfd} is sound and complete.
	\label{thm:correctness-pfd}
\end{thm}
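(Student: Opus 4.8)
The plan is to show that the partition of the index set $I=\{1,\dots,k\}$ produced by the connected components $I_1,\dots,I_l$ of $G_\aux$ coincides with the partition $\mathcal{P}=\{J_1,\dots,J_m\}$ that combines the prime factors $G_i$ of $[H]_2$ into the $2$-sections of the prime factors of $H$ (i.e.\ $[H_s]_2\simeq\Box_{t\in J_s}G_t$). Once this is established, the coarsened pre-coordinatization computed in \texttt{Combine} is exactly the coordinatization of $H$ with respect to its (by Theorem \ref{thm:upfd} unique) prime factor decomposition, which gives both soundness and completeness. Throughout I write $\coord'$ for the fine pre-coordinatization coming from $[H]_2=\Box_{i=1}^k G_i$, write $\coord_\aux$ for the coarsening of $\coord'$ along $I_1,\dots,I_l$, and let $\inc_1=\inc$ be the increment map of $\coord'$.

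First I would prove soundness, namely that $\coord_\aux$ is a genuine coordinatization, by checking the hypothesis of Lemma \ref{lem:prec-coarsen-inc} with $\coord_1=\coord'$ and $\coord_2=\coord_\aux$. Fix an arc $e\in E$; by Lemma \ref{lem:arc-in-prelayer} its vertices differ in a single fine coordinate $j$, say $j\in I_s$, so $e$ lies in the coarse $s$-pre-layer. For any fine index $i\notin I_s$ I claim $\inc_1(e,i)\in E$: otherwise the for-loop in Lines \ref{alg:for-start}--\ref{alg:for-end} would have inserted the edge $ij$ into $G_\aux$, placing $i$ in the same component as $j$ and hence $i\in I_s$, a contradiction. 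Thus the hypothesis of Lemma \ref{lem:prec-coarsen-inc} holds, $\coord_\aux$ is a coordinatization, and the output is a valid factorization $H=\Box_{s=1}^l H_s$ with $[H_s]_2\simeq\Box_{i\in I_s}G_i$.

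For completeness I would prove the two refinement directions between $\{I_1,\dots,I_l\}$ and $\mathcal{P}$. By Theorem \ref{thm:upfd} every factorization of $H$ is a grouping of its prime factors, so by Lemmas \ref{lem:Cart_2section} and \ref{lem:prec-partition} the coarsenings of $\coord'$ that are genuine coordinatizations of $H$ correspond exactly to the coarsenings of $\mathcal{P}$; since soundness shows $\coord_\aux$ is such a coordinatization, $\mathcal{P}$ refines $\{I_1,\dots,I_l\}$. The opposite refinement is the crux: I must show that whenever $i,j$ lie in different blocks of $\mathcal{P}$ no edge $ij$ is ever added. Writing $j\in J_a$, $i\in J_b$ with $a\neq b$, consider any arc $e$ whose vertices differ in the fine coordinate $j$; with respect to the PFD coordinatization $e$ is a ``horizontal'' arc of $H_a$, i.e.\ $e=(t(\tilde e)\times\{q\},\,h(\tilde e)\times\{q\})$ for some $\tilde e\in E(H_a)$ and a fixed position $q$ in the remaining factors. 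Since $i\in J_b$, applying $\inc_1(\cdot,i)$ alters only the $H_b$-entry of $q$, producing another valid position $q'$, so $\inc_1(e,i)=(t(\tilde e)\times\{q'\},\,h(\tilde e)\times\{q'\})$ is again an arc of $H$ by the definition of the Cartesian product. Hence $\inc_1(e,i)\in E$ for every such arc, no edge $ij$ is added, and $\{I_1,\dots,I_l\}$ refines $\mathcal{P}$.

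Combining the two refinements yields $\{I_1,\dots,I_l\}=\mathcal{P}$, so $\coord_\aux$ is precisely the PFD coordinatization and each returned factor $H_s$ is prime; by uniqueness the algorithm outputs the prime factor decomposition, which establishes both soundness and completeness. I expect the main obstacle to be this last refinement step, as it requires translating a \emph{missing} increment with respect to the fine pre-coordinatization $\coord'$ into the product structure of the \emph{coarse} PFD factorization, and in particular arguing carefully that incrementing a fine coordinate lying in a foreign block carries a horizontal arc of a prime factor $H_a$ to a parallel copy of itself rather than destroying it.
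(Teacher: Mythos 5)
Your proposal is correct and follows essentially the same route as the paper's own proof: your soundness step is exactly the paper's Claim~(1), invoking Lemma~\ref{lem:prec-coarsen-inc} with the observation that a missing increment would have forced $i$ into the component $I_s$, and your second refinement direction is the paper's Claim~(2), arguing via the product structure $H=H_a\Box(\Box_{c\neq a}H_c)$ that incrementing a fine coordinate from a foreign block carries a copy of an arc of $H_a$ to a parallel copy, so no cross-block edge is ever added to $G_\aux$. Your write-up is in fact somewhat more explicit than the paper's terse Claim~(2) about why such increments always exist, but the decomposition of the argument and the key lemmas used are the same.
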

\begin{proof}
	Given a hypergraph $H=(V,E)$. We start with a preprocessing and call in Algorithm\
	\ref{alg:pfd} the Algorithm\ \ref{alg:pre}. Here, the PFD of $[H]_2 = \Box_{i=1}^k
	G_i$ and the respective coordinatization $\coord$ of $[H]_2$ is computed.
	This coordinatization $\coord$ is by definition a
	pre-coordinatization of $H$. Since $[H]_2$ is an undirected graph, it is
	allowed to apply the algorithm of Imrich and Peterin \cite{Imrich07:linear}.
	Finally, the vertices, the vertices within the arcs and the arcs are
	ordered in lexicographic order. The latter task is not important for the
	correctness of the algorithm, but for the time-complexity that we will
	consider later on. 

	We are now in Line \ref{alg:init-Gaux} of Algorithm\ \ref{alg:pfd}. 
	By construction of $\coord$ and Lemma \ref{lem:Cart_2section}, the pre-coordinatization $\coord$
	is at least as fine as the coordinatization of $H$ w.r.t.\ its PFD. By Lemma
	\ref{lem:prec-partition} it suffices to find a suitable partition of
	$I=\{1,\dots,k\}$. To this end, we initialize in the auxiliary graph
	$G_{\aux}$ where each vertex $i$ represents an element of $I$. The edge set
	is left empty. We might later add edges in order trace back which equivalence
	classes have to be combined, i.e., all vertices within one connected
	components of $G_{\aux}$ will then be in one class of the respective
	partition of $I$.
	
	Now consider the for-loop in Line \ref{alg:for-start}-\ref{alg:for-end}. For
	each $e\in E_{\lex}$ we check in which coordinates the vertices in $V(e)$
	differ. Since $\coord$ is a pre-coordinatization and by Lemma \ref{lem:arc-in-prelayer}, 
	this is exactly one	coordinate for each hyperarc. Let $e\in E_{\lex}$ be a chosen arc and assume
	that all vertices in $V(e)$ differ in the $j$-th coordinate. Now, it is
	checked if for arc $e$ its ``copies'' are contained in each
	$G_j^w$-layer where $w\in inc(V(e),i)$. If for some arc $e\in E_{\lex}$ we
	observe that there is no hyperedge $\inc(e,i) = (\inc(t(e),i), \inc(h(e),i)\in E_\lex$ then
	there is no ``copy'' of $e$ in some $j$-pre-layer through $w$ with $w\in
	inc(V(e),i)$. In this case we add the edge $ij$ to $G_\aux$ if not already
	set. The latter tasks are repeated for all hyperarcs $e\in E_{\lex}$.

	Finally, in Line \ref{alg:call-combine} the Algorithm\ \ref{alg:combine}
	is called. The task of this subroutine is to combine the pre-coordinates
	and thus, the pre-layers in order to determine the layers of the 
	final prime-factors of $H$.
   Let $I_1,\dots,I_r$ be the connected components of $G_\aux$.
	Clearly, $\Pi=\{I_1,\dots, I_r\}$ is a partition
	of $I$. Let each $I_j$ having $l_j$ elements. 
	Lemma \ref{lem:prec-partition} and Remark \ref{rem:combine} imply that
   $\widehat{\coord}\colon V(H) \to \bigtimes_{l=1}^r \{1,\dots,l_r\}$ 
	is  a pre-coordinatization of	$H$. 
	It remains to show that 
	\begin{enumerate}
		\item[(1.)]  $\widehat{\coord}$ is a coordinatization and 
		\item[(2.)]  $\widehat{\coord}$ is at least as fine as the coordinatization given by the PFD of $H$. 
	\end{enumerate}

	\emph{Claim (1.):}
	By construction, all $e\in E_\lex$ where the vertices differ in the $i$-th coordinate
	w.r.t.\ $\coord$ are now contained in some $I_s$-layer where $i\in I_s\in \Pi$. 
	Moreover, for all $e$ in some $I_s$-layer the increments $\inc(e,j)$ with
	$j\neq i$ and $j\notin I_s$ must be contained in $H$, as otherwise we would
	have added the edge $ij$ to $G_\aux$ and hence, $j\in I_s$. 
	As the latter is true for all $I_s$-layers contained in $\Pi$
	we can apply Lemma \ref{lem:prec-coarsen-inc} and conclude that 
	 $\widehat{\coord}$ is a coordinatization of $H$.

	\emph{Claim (2.):}
	Given the pre-coordinatization $\coord$ of $H$. 
	By construction of $\coord$ and Lemma \ref{lem:Cart_2section},
   $\coord$
	is at least as fine as the coordinatization of $H$ w.r.t.\ its PFD.
	Thus, there is a partition $\Pi'=\{I'_1,\dots, I'_t\}$ of $I$
	 w.r.t.\ the PFD of $H$.
	It remains to show that if there are two indices $i,j \in I_s\in \Pi$,
   then $i,j$ are also contained in the same class of $\Pi'$.
	If $i,j \in I_s\in \Pi$ then they are in same connected component
	$C_s$ of $G_\aux$. Hence, it suffices to consider pairs 
	$i,j \in I_s\in \Pi$ that are connected by an edge. 
	Assume, for contradiction that $i,j$ are in different classes of $\Pi'$. 
	W.l.o.g. let $i\in I'_1$ and $j\in I'_2$. 
	Moreover, let $H=H_{I'_1}\Box H_{\cup_{l\geq2} I'_l}$  
	Hence, for \emph{all} $l\in I'_1$ and thus, in particular for $l=i$
	it holds that for all arcs $e$ in some $I'_2$-layer there is an arc 
	$(\inc(t(e),i), \inc(h(e),i)\in E_\lex$. 
	The same holds with the role of $i$ and $j$ switched. 
	However, in this case we would not add the edge $ij$ to $G_\aux$,
	a contradiction.

	To finish the PFD-computation we have to compute $\widehat\coord$.
	To this end, we compute the 2-section $[H]_2$ with edges $xy$ colored
	with color $j$ whenever $x$ and $y$ are contained in some edge $e$
	that is contained in some $j$-layer of $H$.  
	Lemma \ref{lem:Cart_2section} implies that  $\widehat\coord$
   is also 	a coordinatization of $[H]_2$ and hence, 
	all edges with same color $j$ in $[H]_2$ are contained in the same
	equivalence class of $R_{\widehat\coord}([H]_2)$.
	In other words,  $(e,f) \in R_{\widehat\coord}(H)$ if and only if
	 $(xy,uv) \in R_{\widehat\coord}([H]_2)$ for all distinct $x,y\in V(e)$
	and distinct $u,v\in V(f)$. By construction, $R_{\widehat\coord}([H]_2)$ is a
	a product relation of $[H]_2$, and thus we can apply  again 
	a method proposed the by Imrich and Peterin (cf.\ Theorem 5.1. in \cite{Imrich07:linear}), 
	in order to obtain the desired coordinates and hence, $\widehat\coord$.
\end{proof}

\subsection{Time Complexity}

In order to prove the time-complexity results, we first give the following lemma. 

\begin{lemma}
Let $H$ be a hypergraph, let $[H]_2 = \Box_{i=1}^k H_i$ be a factorization of its $2$-section into $k$ factors, and let $m$
and $n$ be the number of arcs and vertices of $H$, respectively. 
Then for any $l \in \mathbb N_0$ it holds that $k^l \log m = O(n)$.
\label{lem:bound}
\end{lemma}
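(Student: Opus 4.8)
The plan is to control the two quantities through which $k^l\log m$ could grow — the number of factors $k$ and the number of arcs $m$ — and to exploit the fact that these cannot both be large at the same time. The first ingredient is the elementary bound already observed for prime factorizations: since $[H]_2=\Box_{i=1}^k H_i$ is a product of $k$ nontrivial factors (write $n_i:=|V(H_i)|\ge 2$), its vertex set has $n=\prod_{i=1}^k n_i\ge 2^k$ elements, so $k\le\log_2 n$ and hence $k^l\le(\log_2 n)^l$. For fixed $l$ this is $o(n)$, so if $\log m$ were polylogarithmic in $n$ we would be done immediately. The real difficulty is that $m$ can be exponentially large: the crude estimate $m\le 4^n$ only gives $\log m=O(n)$, and combined with $k^l\le(\log_2 n)^l$ this yields the too-weak bound $O\!\big(n(\log n)^l\big)$.

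The key step is therefore a bound on $m$ that is \emph{small precisely when $k$ is large}. By Lemma~\ref{lem:arc-in-prelayer} every arc of $H$ lies entirely in a single $i$-pre-layer with respect to the pre-coordinatization induced by $[H]_2=\Box_{i=1}^k H_i$; such a pre-layer spans only the $n_i$ vertices of one copy of $H_i$. There are $n/n_i$ copies in direction $i$, and the number of arcs whose vertex set lies inside a fixed $n_i$-vertex set is at most $(2^{n_i}-1)^2<4^{n_i}$, since an arc is an ordered pair of nonempty subsets. Summing over the $k$ directions and using $n_i\ge 2$ gives
\[
 m\;\le\;\sum_{i=1}^k \frac{n}{n_i}\,4^{n_i}\;\le\; n\,k\,4^{\,n_{\max}},\qquad n_{\max}:=\max_i n_i.
\]

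The final balancing step uses that the remaining factors each contribute at least two vertices, so $n=\prod_i n_i\ge 2^{\,k-1}n_{\max}$, whence $n_{\max}\le n/2^{\,k-1}$. Taking base-$2$ logarithms,
\[
 \log_2 m\;\le\;\log_2 n+\log_2 k+\tfrac{4n}{2^{\,k}} ,
\]
and multiplying by $k^l$ while using $k\le\log_2 n$ splits the estimate into three pieces: $k^l\log_2 n\le(\log_2 n)^{l+1}=O(n)$, the smaller term $k^l\log_2 k=O(n)$, and $4n\,k^l/2^{\,k}=O(n)$ because $\sup_{k\ge 0}k^l/2^k$ is finite for every fixed $l$. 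Adding these yields $k^l\log m=O(n)$, as required.

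I expect the main obstacle to be exactly this tension between $k$ and $m$: neither bound is useful on its own, and the argument only closes once the arc count is tied to the \emph{sizes} of the pre-layers via Lemma~\ref{lem:arc-in-prelayer}. The inequality $n_{\max}\le n/2^{\,k-1}$ is what converts ``many factors'' into ``small layers,'' and verifying that the resulting exponential contribution $n\,k^l/2^k$ stays linear in $n$ (through boundedness of $k^l/2^k$) is the delicate point that collapses the whole estimate to $O(n)$.
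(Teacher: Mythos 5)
Your proof is correct and follows essentially the same route as the paper's: both bound the number of arcs per pre-layer by $4^{n_i}$ (an ordered pair of nonempty subsets), multiply by the number of pre-layers in each direction, use $n=\prod_i n_i\ge 2^{k-1}n_{\max}$ to control the largest factor, and finish with the fact that $2^k$ dominates any fixed power $k^{l}$. The only difference is bookkeeping — you split $\log m$ into three terms and invoke $k\le\log_2 n$, while the paper collapses the estimate to $\log m\le c\,k\,n_{\max}$ — and your version is in fact slightly tighter in its constants than the paper's.
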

\begin{proof}
If $m_i$ and $n_i$ are the numbers of arcs and vertices of the factors then
$m_i \leq (2^{n_i})^2$, since we have to consider tail and head independently. 
Let $N$ be the maximum number of vertices of a
factor. Then we have 
\[
	m 
	= \sum_{i=1}^k m_i \prod_{\substack{j=1 \\j \neq i}}^k n_j 
	\leq \sum_{i=1}^k 2^{n_i} \prod_{j=1}^k n_j
	\leq k \cdot 2^N \prod_{j=1}^k n_j.
\]
Taking logarithms on both sides of the inequality gives
\[
	\log m 
	\leq \log k + N + \sum_{j=1}^k \log(n_i)
	\leq k + N + k\cdot N
	\leq c \cdot k \cdot N
\]
for some suitable constant $c$.

On the other hand by bounding the size of every factor except the biggest one
from below by $2$ we get $n = \prod_{i=1}^k n_i \geq N \cdot 2^{k-1}$. 
Clearly, $N \cdot 2^{k-1}\geq c' \cdot N \cdot k^l$ for some suitable constant $c'$
depending on $l$. Together with the estimate for $\log m$ this proves the lemma.
\end{proof}

The next two lemmas are concerned with the time-complexity of the subroutines
\texttt{Preprocessing} and \texttt{Combine}.

\begin{lemma}
	Let $H = (V,E)$ be a connected hypergraph with $|V|=n$, $|E|=m$ and rank $r$.
	Then Algorithm \ref{alg:pre} performs in $O(r^2mn)$ time.
	If we assume that $H$ has
	bounded rank, then  Algorithm \ref{alg:pre} has time-complexity $O(m\log^2(n))$. 
	\label{lem:runtime-pre}
\end{lemma}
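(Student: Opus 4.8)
The plan is to analyze Algorithm~\ref{alg:pre} step by step, summing the cost of each line, and then invoke Lemma~\ref{lem:bound} to convert the resulting bounds into the claimed complexities. First I would account for Line~\ref{alg:2sect-pfd}: the Imrich--Peterin algorithm computes the $\Box$-PFD of $[H]_2$ together with vertex coordinates in time linear in the size of $[H]_2$, i.e.\ $O(|V([H]_2)| + |E([H]_2)|)$. Since every hyperarc $e$ with $|V(e)| \le r$ contributes at most $\binom{r}{2} = O(r^2)$ edges to the $2$-section, we have $|E([H]_2)| = O(mr^2)$, and $|V([H]_2)| = n$. Thus this line costs $O(n + mr^2)$, which is dominated by the $O(r^2mn)$ target.

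Next I would bound the reordering tasks. Line~\ref{alg:lexV} lexicographically sorts the $n$ vertices by their coordinate vectors; each vector has $k$ entries (where $k$ is the number of prime factors of $[H]_2$), so a radix or comparison sort costs $O(nk)$ or $O(nk\log n)$. The for-loop in Lines~\ref{alg:for1-begin}--\ref{alg:for1-end} reorders $t(e)$ and $h(e)$ within each arc; since each arc has at most $r$ vertices, sorting one arc costs $O(r\log r)$ (or $O(r)$ with the already-known vertex ranks), giving $O(mr\log r)$ over all arcs. Line~\ref{alg:lexE} sorts the $m$ arcs lexicographically by their tail-then-head sequences; comparing two arcs costs $O(r)$, so this is $O(mr\log m)$. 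The main point is that every one of these contributions is bounded by $O(r^2mn)$, so their sum is too.

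The one step requiring genuine care --- and the main obstacle --- is showing that the factor $k$ and the factor $\log m$ appearing in these sorting bounds do not escape the target complexity. This is precisely where Lemma~\ref{lem:bound} enters: applying it with $l=0$ gives $k^0 \log m = \log m = O(n)$, so $O(mr\log m) = O(mrn)$, which is within $O(r^2mn)$; and since $k \le \log_2 n$ we also have $O(nk\log n) = O(n\log^2 n)$, again dominated. Collecting all contributions yields the overall bound $O(r^2mn)$ for Algorithm~\ref{alg:pre}.

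For the bounded-rank case, $r = O(1)$, so the $r$-dependent factors become constants. The dominant terms are then the sorting costs: the $2$-section construction is $O(n+m)$, the arc sort is $O(m\log m)$, and the coordinate handling involves the factor $k$. Here I would again invoke Lemma~\ref{lem:bound}, this time with $l=1$, to obtain $k\log m = O(n)$ and, more usefully for the $\log^2$ bound, combine $k = O(\log n)$ with $\log m = O(\log n)$ (the latter following from $\log m = O(n)$ together with the observation that in the bounded-rank regime $m = O(\mathrm{poly}(n))$, whence $\log m = O(\log n)$). This collapses all remaining terms to $O(m\log^2 n)$, establishing the second claim. I expect the bounded-rank estimate to be the most delicate bookkeeping, since it hinges on correctly pairing the $k$ and $\log m$ factors against Lemma~\ref{lem:bound} rather than bounding each crudely.
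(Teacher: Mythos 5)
Your proposal is correct and takes essentially the same route as the paper: line-by-line accounting of the sorting and PFD steps, with Lemma~\ref{lem:bound} (the $l=0$ case, $\log m = O(n)$) absorbing the $\log m$ factor to reach $O(r^2mn)$, and the bound $m = O(n^r)$ for bounded rank giving $\log m = O(\log n)$ and hence $O(m\log^2 n)$. The only gloss is Line~\ref{alg:2sect-pfd}: you charge $O(n+mr^2)$ for the Imrich--Peterin call without accounting for constructing $[H]_2$ itself (duplicate edges arising from overlapping hyperarcs must be merged before the graph PFD can be run); the paper does this via sorted adjacency lists with binary-search insertion in $O(n+mr^2\log n)$, which is still dominated by both target bounds, so your conclusion is unaffected.
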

\begin{proof}
	In Line \ref{alg:2sect-pfd}, the first task is the computation of the 2-section $[H]_2$. 
	To this end, we initialize an
	adjacency list  $N[1],\dots,N[n]$ with empty entries,
	 which can be done in $O(n)$ time. We add
	for each arc $e\in E$ and each pair $\{i,j\}\in \binom{V(e)}{2}$ the vertex
	$i$ to $N[j]$ and $j$ to $N[i]$, if these vertices are not already contained
	in the respective adjacency lists. Hence, we must check whether $i\in N[j]$ or not.
	To this end, assume that $N[j]$ is already ordered. Hence we need
	$O(\log(n))$ comparisons to verify if $i\in N[j]$. If this is not the case, 
	the vertex $i$ is added to $N[j]$ on the respective position so that $N[j]$ stays sorted.
	Analogously, we add $j$ to $N[i]$, whenever $j$ is not contained in $N[i]$.
	As for each arc $e\in E$ there are at most $\binom{r}{2} = O(r^2)$ pairs 
	$\{i,j\}$ and	for each such pair we have $O(\log(n))$ comparisons we end in a time-complexity of
	$O(n+mr^2 \log(n))$ to create the adjacency list 
	$N[1],\dots,N[n]$. These lists serve
	than as input for the algorithm of Imrich and Peterin which computes the PFD
	of the 2-section in $O(|E([H]_2)|+n)$ time. Since $[H]_2$  is connected 
	and thus, $[H]_2$ has at least $|V|-1$ edges,  
	the PFD algorithm runs in fact in $O(|E([H]_2|)=O(mr^2)$ time. Hence, the total
	time complexity of Line \ref{alg:2sect-pfd} of Algorithm\ \ref{alg:pre} is $O(n+ mr^2
	\log(n) + mr^2) = O(mr^2 \log(n))$. 

	In what follows, let $k$ be the number of factors of $[H_2]$
	and let each $v\in V$ be identified with its respective (pre-)\ 
	coordinate vector $(v_1,\dots,v_k)$ computed by the  Imrich-Peterin-Algorithm.
	Note, $k$ is bounded by $\log(n)$.

	In Line \ref{alg:lexV}, the list $V$ of vertices is reordered in lexicographic
	order w.r.t.\ the vertex coordinates, i.e., $v<w$ if there is some
	$i\in\{1,\dots,k\}$ with $v_j\leq w_j$ for all $j\in\{1,\dots,i-1\}$ and
	$v_i<w_i$. This task can be done in $O(n\log(n)k) = O(n\log^2(n))$. Since
	$mr\geq n$ we obtain that $O(n\log^2(n)) = O(mr\log^2(n))$. This new ordered
	vertex list is called $V_\lex$.	
	
	We are now concerned with the for-loop in Line \ref{alg:for-start}. For each
	hyperedge $e\in E$ we reorder the vertices of its head and tail w.r.t.\ to the
	order of the vertices in $V_\lex$. Each hyperedge contains at most $r$ vertices and
	hence, this task can be done in $O(r\log(r))$ time. Therefore, the entire
	for-loop (Line \ref{alg:for1-begin} - \ref{alg:for1-end}) takes
	$O(mr\log(r))$ time. 

	Finally, the arcs are reordered w.r.t. the lexicographic ordered sets $t(e)$
	and $h(e)$. We say $e<f$ if $t(e)<t(f)$ or $t(e)=t(f)$ and $h(e)<h(f)$,
	whereby the tails, resp., heads are compared w.r.t. the lexicographic order
	of their vertices. To determine if $t(e)<t(f)$ or $h(e)<h(f)$ for some arcs
	$e,f\in E$, the at most $2r$ pairs of vertices must be compared, whereby the
	comparison of each such pair can be done in $O(1)$ time, since the vertices
	are already ordered in the tails and heads. The reordering of the arcs need
	than $O(m\log(m))$ comparisons, where each comparison can be done in $O(r)$
	time, by the preceding arguments. Hence, the creation of $E_\lex$ takes $O(r
	m\log(m))$ time. By Lemma \ref{lem:bound} this is $O(rmn)$. Moreover, if we
	assume that the rank $r$ is bounded, then $m \leq \sum_{i=1}^r \binom nr \leq rn^r = O(n^r)$. Hence $O(\log(m))
	= O(\log(n^r)) = O(r\log(n))=O(\log(n))$. 
	In this case the time complexity for
	determining $E_\lex$ is $O(m\log(m)) = O(m\log(n))$. 	

	Taken together the latter arguments, we end in overall time complexity for
	Algorithm \ref{alg:pre} of $O(r^2mn)$ 
	and if the rank $r$ is bounded with $O(m\log^2(n))$. 
\end{proof}

\begin{lemma}
	Let $H = (V,E)$ be a connected hypergraph with $|V|=n$, $|E|=m$ and rank $r$.
	Moreover, assume that the graph $G_\aux$ has $k$ vertices and $m'$ edge with 
	$k\leq \log(n)$. 
	Then Algorithm \ref{alg:combine} preforms in $O(mnr^2)$ time.
	If we assume that $H$ has
	bounded rank, then  Algorithm \ref{alg:combine} has time-complexity $O(m\log^2(n))$. 
	\label{lem:runtime-post}
\end{lemma}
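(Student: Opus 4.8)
The plan is to bound Algorithm \ref{alg:combine} line by line and then to observe that its two expensive operations---building an edge-colored $2$-section and running the Imrich--Peterin coordinate routine---are precisely those already analysed in Lemma \ref{lem:runtime-pre}, so the bound mirrors that of \texttt{Preprocessing}. First I would dispose of the cheap steps. Computing the connected components $I_1,\dots,I_l$ of $G_\aux$ (Line \ref{alg3:conn}) costs $O(k+m')$; since $k\le\log n$ and $m'\le\binom{k}{2}=O(\log^2 n)$, this is $O(\log^2 n)$, and it simultaneously yields a lookup array sending each index $i\in\{1,\dots,k\}$ to its component. In the for-loop of Lines \ref{alg3:fora}--\ref{alg3:fore}, for every arc $e\in E_\lex$ the unique differing coordinate $j$ is found by comparing the coordinate vectors of two distinct vertices of $V(e)$, which by Lemma \ref{lem:arc-in-prelayer} agree in all but the $j$-th entry; this costs $O(k)=O(\log n)$, while the component lookup and the color assignment are $O(1)$. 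Hence the entire loop runs in $O(m\log n)$ time.

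Next I would treat the expensive steps. Computing the edge-colored $2$-section (Line \ref{alg3:h2}) proceeds exactly as the $2$-section construction in the proof of Lemma \ref{lem:runtime-pre}: one iterates over the arcs and, for each of the at most $\binom{r}{2}=O(r^2)$ vertex pairs $\{x,y\}$ of an arc $e$, inserts the edge $xy$---carrying the color of $e$---into a sorted adjacency list, using an $O(\log n)$ binary search to avoid duplicates. That no edge receives two distinct colors follows from Theorem \ref{thm:correctness-pfd}: if $x,y$ lie in two arcs $e,f$, then $x,y$ differ in the same coordinate, so $e$ and $f$ are colored alike. This gives $O(n+mr^2\log n)$. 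Because $R_{\widehat\coord}([H]_2)$ is a product relation (again Theorem \ref{thm:correctness-pfd}), the coordinates in Line \ref{alg3:coord} are obtained by the method of Imrich and Peterin (Theorem 5.1 in \cite{Imrich07:linear}) in time $O(|E([H]_2)|+n)=O(mr^2+n)$, after which the $n$ vertices are re-sorted by their new (at most $k$-dimensional) coordinate vectors in $O(n\log n\cdot k)=O(n\log^2 n)$ time, exactly as in the $V_\lex$-step of Lemma \ref{lem:runtime-pre}.

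Summing the contributions gives $O(\log^2 n + m\log n + mr^2\log n + n\log^2 n)=O(mr^2\log n + n\log^2 n)$. Since $\log n=O(n)$ and, for connected $H$, $n\le mr\le mr^2$ so that $n\log^2 n\le n^2\le mnr^2$, this is $O(mnr^2)$, proving the first claim. For the bounded-rank case I would reuse the estimate of Lemma \ref{lem:runtime-pre}: here $m\le rn^r=O(n^r)$ yields $\log m=O(r\log n)=O(\log n)$, the factors $r^2$ are constant, and $n=O(m)$; the dominant term is then the vertex re-sorting $O(n\log^2 n)=O(m\log^2 n)$, which matches the asserted bound.

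The only genuine obstacle I anticipate is not the arithmetic but the justification that the coloring of Lines \ref{alg3:h2}--\ref{alg3:coord} is well defined and yields a true product relation, so that the Imrich--Peterin routine applies with the stated running time; this, however, is exactly what Theorem \ref{thm:correctness-pfd} establishes, so for the complexity claim I may cite it directly. The remaining care is purely bookkeeping: keeping every adjacency list sorted so that the $O(\log n)$-per-insertion accounting---and hence the overall estimate---is valid, precisely as in the proof of Lemma \ref{lem:runtime-pre}.
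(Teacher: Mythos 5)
Your proposal is correct and follows essentially the same route as the paper's own proof: cheap component computation and arc-coloring loop in $O(\log^2 n + m\log n)$, the colored $2$-section via sorted adjacency lists in $O(n+mr^2\log n)$ as in Lemma \ref{lem:runtime-pre}, and the Imrich--Peterin coordinate routine in $O(mr^2+n)$, summed and simplified using $n\le mr$ and $\log^2 n = O(n)$. The only deviations are harmless additions: the paper does not include your final vertex re-sorting step (it is not part of Algorithm \ref{alg:combine}, though including it does not violate either bound), and the well-definedness of the edge coloring, which you attribute to Theorem \ref{thm:correctness-pfd}, already follows from Lemma \ref{lem:arc-in-prelayer}.
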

\begin{proof}
   Determining the connected components of $G_\aux$ in Line \ref{alg3:conn}
	can be done in $O(k+m')=O(\log(n)+\log^2(n))$ time by application of 
   the classical breadth-first search. 
	While doing this, we will in addition record in $O(1)$ time for each vertex 
	in which connected component it is contained.
	Let $I_1,\dots,I_l$ be the connected components of $G_\aux$.
	
	For each of the $m$ arcs we have to find the indices where the vertices of
	the particular arc differs. To this end, it suffices to take any two vertices
	$x$ and $y$ of $V(e)$ and to compare their $k$ coordinates which takes $O(k)$
	time. Let $j$ be the coordinate where the two vertices differ. We need to
	check in which of the connected components $I_s$ the vertex is contained in,
	which can be done in $O(1)$ time, since we have already recorded for each
	vertex of $G_\aux$, in which component it is contained in. 
	Now, the color for each arc can be 
	recorded in $O(1)$ time.
   Hence, the for-loop (Line \ref{alg3:fora}-\ref{alg3:fore}) has
	overall-time complexity $O(mk) = O(m\log(n))$.

	To compute the 2-section in Line \ref{alg3:h2} with colored edges we
	initialize an extended adjacency list $N[1],\dots,N[n]$ where whenever we add
	some $i\in N[j]$ we also record the respective unique color of $ij$ as a 2nd
	parameter. Recording this parameter can be done in $O(1)$ time, as for each
	arc $e\in E$ it is known which color it has. Hence, we can argue analogously
	as in the proof of Lemma \ref{lem:runtime-pre}, and state that the 2-section
	with additionally colored edges
	can be computed in $O(n+mr^2 \log(n))$ time. 

	Finally, the vertex-coordinates in $[H]_2$ can be computed in $O(m')=O(mr^2)$
	time, see Theorem 5.1. in \cite{Imrich07:linear}. 
	
	Hence the overall-time complexity of Algorithm \ref{alg:combine} 
	is $O(\log^2(n) + m\log(n) + n+mr^2 \log(n) + mr^2) = O(n+m\log^2(n)r^2)$. 
	Since $mr\geq n$ and $\log^2(n) = O(n)$,
	the latter can be expressed as $O(mnr^2)$. 
   If we assume in addition that the rank $r$ is bounded we get 
	$O(n+m\log^2(n)r^2) =O(mr+m\log^2(n)r^2) = O(m\log^2(n))$.
\end{proof}

We are now in the position to determine the time-complexity of algorithm 
\PFDalg.

\begin{thm}
	Let $H = (V,E)$ be a connected hypergraph with $|V|=n$, $|E|=m$ and rank $r$.
	Then Algorithm \ref{alg:pfd} computes the PFD of $H$ in $O(mnr^2)$ time. 
	If the rank $r$ is bounded the time-complexity of Algorithm \ref{alg:pfd} 	
	is $O(m\log^2(n))$.
	\label{thm:runtime-pfd}	
\end{thm}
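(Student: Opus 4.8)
The plan is to bound the total running time of Algorithm~\ref{alg:pfd} by summing the costs of its three phases: the call to \texttt{Preprocessing} (Line~\ref{alg:prepr}), the initialization of $G_\aux$ together with the main for-loop over $E_\lex$ (Lines~\ref{alg:for-start}--\ref{alg:for-end}), and the call to \texttt{Combine} (Line~\ref{alg:call-combine}). The first and third phases are already settled: by Lemma~\ref{lem:runtime-pre} the preprocessing runs in $O(mnr^2)$ time (resp.\ $O(m\log^2 n)$ for bounded rank), and by Lemma~\ref{lem:runtime-post} the combine step runs in $O(mnr^2)$ (resp.\ $O(m\log^2 n)$). Here one checks that the hypotheses of Lemma~\ref{lem:runtime-post} hold, since $G_\aux$ has $k\le\log n$ vertices and, using an adjacency matrix so that edge insertions cost $O(1)$, at most $O(k^2)=O(\log^2 n)$ edges. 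Thus the only genuinely new estimate concerns the for-loop, and the theorem follows once this loop is shown to cost $O(mnr^2)$ (resp.\ $O(m\log^2 n)$).

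For the for-loop I would first fix an efficient representation. Since $[H]_2=\Box_{i=1}^k G_i$, the vertex set is the full product $\bigtimes_{i=1}^k\{1,\dots,l_i\}$, so the position of a vertex $v=(v_1,\dots,v_k)$ in the lexicographic list $V_\lex$ equals its mixed-radix value $\sum_i (v_i-1)\prod_{j>i} l_j$. Precomputing the partial products $\prod_{j>i} l_j$ in $O(k)$ time, the lex-index of $\inc(v,i)$ can be obtained from that of $v$ in $O(1)$ time, and two vertices compare in lex order iff their indices compare, i.e.\ in $O(1)$ time. Representing each arc by the sorted index-lists of its tail and head, one loop iteration breaks down as follows: determining the unique differing coordinate $j$ costs $O(k)$; then, for each of the $k-1$ indices $i\neq j$, we compute $\inc(e,i)$, observing that by Lemma~\ref{lem:arc-in-prelayer} all vertices of $e$ agree in coordinate $i$, so $\inc(\cdot,i)$ shifts them uniformly and preserves the sorted order within $t(e)$ and $h(e)$, whence $\inc(e,i)$ is produced already sorted in $O(r)$ time. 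Testing $\inc(e,i)\in E_\lex$ is then a binary search in the sorted list $E_\lex$ using $O(\log m)$ arc-comparisons, each comparing at most $2r$ vertex indices in $O(r)$ time, for a total of $O(r\log m)$; the possible insertion of edge $ij$ into $G_\aux$ is $O(1)$.

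Summing these, the for-loop costs $O\!\big(m\,(k + k\cdot r\log m)\big)=O(mkr\log m)$. Applying Lemma~\ref{lem:bound} with $l=1$ gives $k\log m=O(n)$, whence $O(mkr\log m)=O(mr\cdot n)=O(mnr)\subseteq O(mnr^2)$. For bounded rank, $r=O(1)$, $k=O(\log n)$, and $\log m=O(\log n)$ (exactly as argued in the proof of Lemma~\ref{lem:runtime-pre}), so the loop costs $O(m\log^2 n)$. Adding the three phases and using $\log^2 n=O(n)$ and $mnr\le mnr^2$ then yields the claimed overall bounds $O(mnr^2)$, resp.\ $O(m\log^2 n)$ for bounded rank.

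The main obstacle is attaining the tight bounded-rank bound $O(m\log^2 n)$ for the for-loop, since the inner membership test is executed $\Theta(mk)$ times. A naive implementation comparing full coordinate vectors would make each vertex comparison cost $O(k)=O(\log n)$ and thereby inflate the loop to $O(m\log^3 n)$; the index-based $O(1)$ comparison, justified by the mixed-radix structure of $V_\lex$, together with the order-preserving nature of increments inside an arc, is precisely what removes this extra logarithmic factor.
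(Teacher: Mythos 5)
Your proposal is correct and follows essentially the same route as the paper: the same three-phase decomposition invoking Lemmas~\ref{lem:runtime-pre} and \ref{lem:runtime-post}, the same per-iteration accounting of the main loop ($O(k)$ to find the differing coordinate, $O(r)$ per increment, $O(r\log m)$ per binary search in $E_\lex$), and the same use of Lemma~\ref{lem:bound} to convert $O(mkr\log m)$ into $O(mnr)$, with $\log m = O(\log n)$ in the bounded-rank case. Your mixed-radix indexing is simply an explicit implementation of the paper's assumption that vertices are stored as integers consistent with the lexicographic order, so it adds justification but not a different argument.
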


\begin{proof}
	We suppose both the vertices and the hyperarcs of $H$ implemented
	as integers and $E$ implemented as an $m\times 2$ array, 
	where each entry $E[e,i]$ contains the list of vertices in $t(e)$
	if $i=1$ and $h(e)$	if $i=2$.
	In Line \ref{alg:prepr} we call \texttt{Preprocessing}($H$) which takes  
   $O(r^2mn)$ time and if $r$ is bounded $O(m\log^2(n))$ time
	(Lemma \ref{lem:runtime-pre}). 

	In what follows, let $k\leq\log(n)$ be the number of factors of $[H_2]$
	and assume that each $v\in V$ is identified with its respective (pre-)coordinate vector
	$(v_1,\dots,v_k)$.

	In Line \ref{alg:init-Gaux} the auxiliary graph is be initialized. 
	In particular, we initialize $G_\aux$ as adjacency list, i.e., we create 
	empty lists $N[1],\dots,N[k]$ which can be done  in $O(k)$ time.
	
	We are now concerned with the for-loop in Line \ref{alg:for-start} -
	\ref{alg:for-end}. For each of the $m$ arcs we have to find the
	indices where the vertices of the particular arc differs. To this end, 
	any two vertices $x$ and $y$ of $V(e)$ are chosen and their
	$k$ coordinates are compared, which takes $O(k)$ time. The nested for-loop (Line
	\ref{alg:for2-begin} - \ref{alg:for2-end}) is executed for all coordinates $i$
	where the vertices of arc $e$ are identical and it is checked whether $\inc(e,i)=(\inc(t(e),i),
	\inc(h(e),i)$ is contained in $E_\lex$ or not. The increment $(\inc(t(e),i),
	\inc(h(e),i)$ can be computed in $O(r)$ time. Note, the vertices within
	$\inc(t(e),i)$ and $\inc(h(e),i)$ are still lexicographically ordered as only
	vertex-coordinates are incremented that have been identical for the vertices
	within the arc and thus,their $i$-th positions are
	all still equal after the computation of $\inc(e,i)$. We now check
	whether $(\inc(t(e),i), \inc(h(e),i)\notin E_\lex$. Since $E_\lex$ is already
	ordered, binary search finds the corresponding arc using at most $O(\log(m))$
	comparisons of arcs and since head and tail of each arc are in lexicographic order 
	comparing two arcs takes $O(r)$ time. 
	Therefore, the if-condition in Line \ref{alg:if} takes $O(r+r\log(m)) = 
	O(r\log(m))$ time. 
	In case,	$(\inc(t(e),i), \inc(h(e),i)\notin E_\lex$ we have to add a respective edge
	$ij$ in $G_\aux$, if not already set. Hence, to check whether $ij$ exists in
	$G_\aux$, we need to validate if $i\in N[j]$. To this end, assume that $N[j]$
	is already ordered. Hence we need $O(\log(k))$ comparisons to verify if $i\in
	N[j]$. If this is not the case $i$ is added to $N[j]$ on the respective
	position so that $N[j]$ stays sorted. Similarly, $j$ is added to $N[i]$
	whenever $i\not\in N[j]$. 
	Hence the nested for-loop in (Line \ref{alg:for2-begin} -
	\ref{alg:for2-end}) has time complexity $O(k(r\log(m)+\log(k))) =
	O(kr\log(m))$, since the number of arcs $m$ 
	is at least as big as the number (non-trivial) factors $k$. 
	Take together the latter arguments, the entire for-loop 
	in Line \ref{alg:for-start} -	\ref{alg:for-end} has time-complexity
	$O(m(k+kr\log(m))) = O(mkr\log(m))$. 
	By Lemma \ref{lem:bound} this is $O(mnr)$.
	Moreover, if we assume that the rank $r$ is bounded, then 
	$m\leq n^r$ and hence, $O(\log(m)) = O(\log(n^r)) = O(r\log(n))=O(\log(n))$
	In this case, the time complexity of Line \ref{alg:for-start} -	\ref{alg:for-end} 
	is $O(mk\log(m)) = O(mk\log(n)) =O(m\log^2(n))$. 	
	
	Finally, we use Algorithm \ref{alg:combine} which performs in $O(mnr^2)$ time and
	if the rank $r$ is bounded it has time-complexity $O(m\log^2(n))$ (Lemma \ref{lem:runtime-post}).

	To summarize, each step of Algorithm \ref{alg:pfd} can be performed in 
	$O(mnr^2)$ time and if the rank $r$ is bounded the time-complexity 
	is $O(m\log^2(n))$.
\end{proof}

\section*{Acknowledgment}
We thank the organizers of the 8th Slovenian Conference on Graph Theory (2015)
in Kranjska Gora, where the authors participated, met and basically drafted the main ideas
of this paper, while drinking a cold and tasty red Union, or was it 
a green La{\v{s}}ko?

We also thank Wilfried Imrich and Iztok Peterin for helpful comments regarding
the time-complexity of our algorithm.

\bibliographystyle{elsarticle-harv}
\bibliography{biblio}

\end{document}